\documentclass[reqno,11pt]{amsart}
\usepackage{stmaryrd}

\usepackage{amsmath}
\usepackage{amsfonts}
\usepackage{amssymb}
\usepackage{amsthm}

\usepackage{physics}
\usepackage{tensor}
\usepackage{slashed}

\usepackage{bbold}
\usepackage{mathtools}

\usepackage[nolist]{acronym}
\usepackage{hyperref}

\usepackage{biblatex}

\usepackage{pgfplots}

\usepackage{color} 
\usepackage{soul} 
\DeclareMathOperator{\Lin}{L}

\DeclareMathOperator{\Span}{span}
\DeclareMathOperator{\supp}{supp}
\DeclareMathOperator{\id}{id}

\DeclareMathOperator{\Var}{Var}

\newcommand{\defeq}{\coloneqq}
\newcommand{\eqdef}{\eqqcolon}

\newcommand{\diracad}[1]{\overline{#1}}
\newcommand{\complexconj}[1]{\overline{#1}}

\newcommand{\opnorm}[1]{
        {\left| \kern-0.25ex \left| \kern-0.25ex \left| #1 \right| \kern-0.25ex \right| \kern-0.25ex \right|}
}

\newcommand{\spinbra}[1]{\left. \prec #1 \right|}
\newcommand{\spinket}[1]{\left| #1 \succ \right.}
\newcommand{\spinbraket}[2]{\prec #1 | #2 \succ}
\newcommand{\Freg}{\F_n^{\text{reg}}}

\newcommand{\N}{\mathbb{N}}
\newcommand{\R}{\mathbb{R}}
\newcommand{\C}{\mathbb{C}}
\renewcommand{\H}{\mathcal{H}}
\newcommand{\F}{{\mathcal{F}}}
\newcommand{\s}{{\mathfrak{s}}}
\renewcommand{\L}{{\mathcal{L}}}

\newtheorem{definition}{Definition}[section]
\newtheorem{proposition}[definition]{Proposition}
\newtheorem{lemma}[definition]{Lemma}
\newtheorem{corollary}[definition]{Corollary}

\newtheorem{conjecture}[definition]{Conjecture}

\numberwithin{equation}{section}

\setlength{\marginparwidth}{3.0cm}

\begin{acronym}
    \acro{cfs}[CFS]{causal fermion system}
    \acro{el}[EL]{Euler-Lagrange}
    \acro{qt}[QT]{quantum theory}
    \acro{qrf}[QRF]{quantum reference frames}
    \acro{gr}[GR]{general relativity}
    \acro{fcdm}[FCDM]{fermionic condensate dark matter}
\end{acronym}

\addbibresource{claudio.bib}

\begin{document}
    \begin{abstract}
    In this paper, we argue that spacetime in causal fermion systems can be understood as the web of correlations of a many-body quantum system.
    This argument highlights the fact that causal fermion systems is a completely relational theory.
    We also explain how our perception of a background (spacetime) emerges in the limit where the number of states taken in consideration goes to infinity. This thereby constitutes a complete viable ontology for causal fermion systems which are not reliant on the continuum limit. A key insight is the fact that in a relevant subset of causal fermion systems, which includes the continuum limit of the Minkowski vacuum spacetime,  minimization of the causal action can be understood as a minimization of fluctuations in the causal structure of spacetime. 
\end{abstract}

    \title[Causal Fermion Systems]{Causal Fermion Systems: Spacetime as the web of correlations of a many-body quantum system}

    \author[P. Fischer]{Patrick Fischer$^*$}
    \address{$^*$ Fakultät für Mathematik \\ Universität Regensburg \\ D-93040 Regensburg \\ Germany}
    \email{patrick.fischer@ur.de}

    \author[C. F. Paganini]{Claudio F. Paganini$^{*, \dagger}$}
   
    \address{$^\dagger$ Fakultät für Mathematik \\ TU Chemnitz\\ D-09111 Chemnitz \\ Germany}
    \email{claudio.paganini@mathematik.tu-chemnitz.de}

    \maketitle

    \tableofcontents

    \section{Introduction}\label{sec:introduction}

In recent years (co)relational approaches to fundamental physics have been gaining popularity.
Be this Rovelli's relational interpretation of quantum theory~\cite{rovelli1996relational,martin2023fact, smerlak2007relational,laudisa2019open,yang2019relational,robson2024relational}, the idea of relative locality by Amelino-Camelia et al.~\cite{amelino2011principle}, or the insight by Kempf et al.~that the metric is
encoded in the coincidence limit of the two-point correlation function~\cite{kempf2021correlational, reitz2023model}.
Correlations also play a central role in several other works~\cite{ormrod2024quantum,smolin2020temporal,oldofredi2021bundle,vidotto2022relational,kastner2022transactional, padmanabhan2020probing} with e.g.~Jia~\cite{jia2021correlationalworld,jia2022complexsimplicalQG,jiatimeorder} putting an emphasis on causal
correlation via Synge's world function.
Correlations also notably play a key role in our observations of the early universe, namely the fluctuations of the cosmic
microwave background~\cite{planck2016astronomy}.

Meanwhile, in the past two decades, Finster developed the theory of \ac{cfs}~\cite{cfs, website} with a very different
idea in mind, namely that the Dirac sea plays an essential role.
In terms of the mathematical results, \ac{cfs} is by now one of the leading contenders for a theory that unifies general relativity
with quantum theory.
The theory has not only succeeded in deriving general relativity and the standard model (gauge groups and their correct
representations) in suitable limiting cases, but in doing so has provided us with a resolution of two long-standing
puzzles in fundamental physics: the three generations of fermions and the relative weakness of gravity~\cite{cfs}.
The recently completed derivation of quantum field theory~\cite{qft} completes the reconstruction of our current best
theories within the appropriate limits.
Recently, first proposals towards a distinguishing phenomenology have been put forwards, regarding e.g.~baryogenesis~\cite{baryogenesis}
and the resolution of the measurement problem via an effective collapse model~\cite{finster2024causal}.

However, despite these successes, so far we only have a sense of a physical interpretation, i.e.~an operationalization
of the fundamental mathematical objects of the theory in terms of concepts we have experience with in the world around us,
in the continuum limit, where we can rely on an effective description in terms of the well established notions of spacetime and matter fields.
In this paper, we close this gap and put forward a first proposal for a full physical interpretation of \ac{cfs} that is a priori independent of the effective description in terms of classical spacetime and matter fields.
We propose a concise and viable ontology of the theory, starting with the realization that the fermionic projector, which has
been introduced by Finster to encapsulate the information contained in the Dirac sea; is, in Minkowski space, nothing else
than the fermionic two-point correlator.
Therefore, it turns out that \ac{cfs} is a possible realization of the ideas put forward by Kempf et al.~\cite{kempf2021correlational,reitz2023model} that the
fundamental properties of the universe should be encoded by correlations alone.

The fundamental ontology proposed in this paper integrates concepts and analogies from various fields, tailoring them to
the requirements of \ac{cfs}.
First, from information theory, the Shannon's sampling theorem, which is a classical result that establishes a
sufficient condition for a sample rate that permits a discrete sequence of samples to capture all the information from a
continuous-time signal of finite bandwidth.
The intuition we want to borrow here is that in the beginning the approximation improves with every added sample, while
at some point the approximation is exact and additional samples add no further information.
This result provides the intuition for reconstructing information in \ac{cfs} by adding more and more wave functions.
This also runs into a limit once we exhaust the space of occupied states in our physical system.
At this point, we argue that to make contact with conventional formulations of $N$-body systems, it is useful to add further
states to this system.
However, in the spirit of Shannon's sampling theorem, these auxiliary states do not contain additional information about
the physical system.

Second, we recall that it is a standard feature of many theories that one imagines the behavior of a test object to trace
out the properties of the physical system of interest.
General relativity takes this one step further, where one studies families of test particles to probe the structure of
spacetime via the geodesic deviation equation or the Jacobi equation.
Now the key difference here is that mathematically the test particle and the physical system it tests are different objects,
i.e.~geodesics vs.~manifolds with metric structures.
It is implicitly understood that in the real world, the test bodies are, of course, also part of the system and influence spacetime.
For experiments, this is usually a good enough approximation as long as the test particle is sufficiently small relative
to the physical system being tested\footnote{
    In general relativity the effect of the test mass on the trajectory is a subject of intense study and goes under the
    name of the self-force problem, see, e.g.,~\cite{harte2018foundations,thompson2019gravitational} and references therein.
}.
Now in \ac{cfs} the difference is that, in principle, we can pick any sufficiently small subsystem (i.e.~a collection of states)
as a test system and treat the remaining states as the background whose structure is to be tested.
Here, the Hilbert space containing the aforementioned auxiliary states comes in handy because we might as well treat
suitably chosen states as test systems, much in the spirit of geodesics in general relativity.

Our next observation provides contact with the ideas developed in the framework of \ac{qrf}~\cite{lake2023quantum,goeller2022frames, quantumFrameRelativity, bartlett2007reference, fewster2025quantum,Apadula2024quantumreference, ali2024quantum, Krumm2021quantumreference,Yang2020switchingquantum,carette2025operational,giacomini2019quantum}.
Here, it is important to note that there is one caveat to the above statement that any subsystem can serve as a test system:
it is intuitively clear that in order for an experiment to be conducted, a system needs to be suitably localized.
Therefore, not every split of the system into background and probe is equally viable as a description for laboratory experiments.
To clarify which splits are viable, we introduce the concepts of localized and delocalized states in Section~\ref{subsec:spacetime-superposition} as well as the idea
of quasilocalized states in the outlook.
The different viable choices of a split (more formally a representation of the subspace of the background space)
can be understood as a change in reference frames.
In this context, we pay particular attention to the fact of how the -- a priori ad hoc -- split into background and probe enables
us to explain the emergence of spacetime as a background.
On this background, we study fields and particles from within the fundamental structures of \ac{cfs} alone.
This is primarily discussed in Section~\ref{sec:experiment} where we elaborate on how habitually we perform this split
between background and probe when setting up any experiment.
This way of constructing a background/ (quantum) reference frame is strictly incompatible with some of the scenarios studied
in the context of \ac{qrf} as we explain in this section.

Fourth, we want to incorporate the intuition underlying many particle quantum systems~\cite{fetter2012quantum,zagoskin2016quantum,wen2004quantum} into the description of \ac{cfs}.
In contrast to the standard formulation of \ac{cfs} we work with the larger auxiliary Hilbert space together with a state
represented by a density matrix that projects on the occupied states.
Although this carries an extra baggage of redundant information, we believe that this picture is beneficial to understand the full fundamental ontology proposed in this paper.
This is another case where working with an auxiliary Hilbert space makes it easier to connect to established frameworks.

Finally, based on Kempf et al.~\cite{kempf2021correlational,goeller2022frames}, we assume a correspondence between the correlation strength and the causal distance.
In the spirit of the approximation procedure in Shannon's sampling theorem throughout this paper, we follow the path starting from
individual states and build the entire system from there by summing over all states.
In that spirit, we first introduce a set of observables for the individual particles in the many-body system,
which we then aggregate to observables for the system as a whole.
The study of these observables was originally motivated by the arguments in~\cite{dzhunushaliev2023quantum,dzhunushaliev2023quantization}
to consider theories with operator-valued measures.
These ideas have a natural realization in the context of \ac{cfs}, because the Lagrangian of the theory is formulated in
terms of a measure over a subset of the bounded linear operators.
Under suitable assumptions, these operators possess a manifold structure~\cite{Finster2021}, thus merging the mathematical structures
underlying \ac{gr} and \ac{qt}.

An important point to note here is that while summing over the contributions of individual states works for all the properties
of the system mentioned so far, it fails to hold for the action at the heart of \ac{cfs}, because it turns out to be nonlinear
in the states.
This nonlinearity can be easily understood in the context of an interesting subset of \ac{cfs} for which we prove that
the causal action principle can be written as the variance of the two-point correlation strength.
Minimizing the action therefore leads to the principle of \textit{minimal fluctuations}.

In summary, the ontology for \ac{cfs} that we present here states that the information of a system is encoded entirely in the two-point correlation strength and the
position observables of the states in the Hilbert space. 
We conclude that all we can ever hope to measure in an experiment is the (spacetime) position of fermions, these are the
`beables' of the theory\footnote{It has already been argues in the context of Bohmian mechanics that fundamentally we can only really measure the positions of particles\cite{esfeld2014ontology,gisin2018bohmian,lazarovici2020position}. Note that this is in stark contrast to the transactional interpretation of quantum mechanics~\cite{kastner2022transactional} that holds, that momenta are the fundamental observables.}.
All other physical concepts that we perceive, including all the gauge fields, are emergent observables inferred from the
change in position of fermions.
These emergent concepts are chosen in such a way that they enable the simplest possible description of the spacetime dynamics
of the observed fermions.

The organization of the paper is as follows.
In Section~\ref{sec:preliminaries} the structures of \ac{cfs} are briefly introduced.
In Section~\ref{sec:ontology-of-cfs} we discuss the observables for subsystems in the many-body quantum system, as well as the
observables of the total system.
In passing, we introduce the idea of a single particle sub-spacetime and its superposition.
Additionally, we present a suggestive derviation of the closed chain and there kernel of the fermionic projector based
on considerations of causal correlations between spacetime points.
In Section~\ref{sec:continuum-limit} the continuum limit of \ac{cfs} is discussed for the example of Minkowski spacetime.
Thereby, we show that the fermionic projector matches the two-point correlator familiar from standard formulations of quantum field theory in Minkowski space.
Moreover, in this limit, we see that we can cast the Lagrangian in the causal action principle into a more suggestive form, namely as the variance of the two-point correlation strength. 
Finally, in Section~\ref{sec:experiment} we provide a full description of an abstract experiment relying purely on the structures of the theory itself.
Finally, in section~\ref{sec:outlook} we discuss a number of important questions raised by the broad considerations in this paper.

\section{Preliminaries: Causal Fermion Systems} \label{sec:preliminaries}

The established definition of a \ac{cfs} consists of three objects:
\begin{enumerate}
    \item a Hilbert space~$\H$,
    \item a suitably chosen subset~$\F_n$ of the bounded linear operators on the Hilbert space,
    \item a measure~$\rho$ defined on the Borel $\sigma$-algebra with respect to the norm-topology on the bounded linear
    operators which we denote by~$\Lin(\H)$.
\end{enumerate}

Formally, the full definition is given below.
\begin{definition}[Causal Fermion System]
    \label{def:cfs}
    Let $(\H, \braket{\cdot}{\cdot})$ be a separable complex Hilbert space, $n \in \N$
    and $\F_n \subset \Lin(\H)$ be the set of all symmetric operators on $\H$ of finite rank, which (counting multiplicities) have at most $n$ positive and at most $n$ negative eigenvalues.
    Further, let $\rho$ be a positive measure (defined on a $\sigma$-algebra of $\F_n$).
    We refer to $(\H, \F_n, \rho)$ as a \textbf{causal fermion system} of \textbf{spin dimension} $n$.
\end{definition}

The operators with exactly~$n$ positive and~$n$ negative eigenvalues constitute the subset~$\F_n^{\text{reg}}$ of
\textit{regular points} in the set~$\F_n$.
One can show~\cite{Finster2020, Finster2021} that the subset~$\F_n^{\text{reg}}$ is a Banach manifold.
Hence, the theory of \ac{cfs} is based on the mathematical structure of an \textit{operator manifold}, thereby fusing the
underlying structure of \ac{gr} and quantum theory.
An element $x$ in $\F_n$ encodes the correlations of all states in the Hilbert space at a point.
we therefore identify this correlation information with a potential point in spacetime.
The measure $\rho$ then determines which correlations are realized in a particular physical system of spin dimension $n$.

Near the continuum limit, we can make this identification of a spacetime point with the correlation information more concrete.
To this end, we introduce a map from a classical spacetime\footnote{
    Throughout the paper classical spacetime refers to a $(3 + 1)$-dimensional, time-orientable, Lorentzian manifold while
    spacetime corresponds to the structure in \ac{cfs} defined below.
} and matter configuration into the operator manifold~$\F_n^{\text{reg}}$.
At this point we only introduce the so-called \textit{local correlation map} $F$ as an abstract concept
\begin{align}\label{eq:localcorrelation}
    F[g_{\mu\nu}, A_\mu,\dots]: \mathcal{M}  \qquad &\to \qquad \qquad\F_n \\
    x \qquad &\mapsto \qquad F[g_{\mu\nu}, A_\mu,\dots](x).
\end{align}
For a more detailed motivation of this construction see e.g.~\cite{mmt-cfs}.
For our purpose, the crucial point is that the map allows us to identify points in a classical spacetime~$\mathcal{M}$
with operators in the set~$\F_n$.
This motivates our present attempt to formulate a direct ontology for \ac{cfs} that is independent of the continuum limit
(although, as we discuss in Section~\ref{sec:experiment} to interpret experiments, the continuum limit is useful nevertheless).
We explain the construction of this map in Minkowski space in some detail in Section~\ref{sec:causalcorrelations} while
the precise construction can be found in~\cite{cfs}.
The map $F$ depends on the metric $g_{\mu\nu}$, as well as the matter fields defined on the classical spacetime under consideration.
Then, the local correlation map defines a measure $\rho$ of a subset $\Omega \subset \F_n$ as the push forward of the
measure $\mu$ of the classical spacetime under the map $F$
\begin{align}
    \label{eq:pushforward}
    \rho(\Omega) \defeq \mu(F^{-1}(\Omega))= \int_{F^{-1}(\Omega)} \Phi.
\end{align}
Here~$\Phi$ is the volume form of the measure on the classical spacetime.

If we consider the metric measure on the manifold, then $\Phi = \sqrt{\abs{g}}\dd^4 x$.
However, the construction of the push-forward measure also works for more general measures on the manifold $\mathcal{M}$~\cite{mmt-cfs}.
The idea we want to convey here is that the fundamental physical reality is made up of a web of correlations encoded in
the operators $x \in \F$ and that the local correlation map provides us with a dictionary to translate the structure of
this complicated web of correlations into the simple language of classical spacetime endowed with suitable matter fields.
These conventional concepts are an emergent effective description of the physical system, much in the same spirit as
e.g.~temperature and pressure arise in thermodynamics.

For notational simplicity, we denote the image of a classical spacetime point $x$ under the local correlation map
$F\,[g_{\mu\nu}, A_\mu,\dots](x) \defeq x$ with the same symbol.
That means that depending on the context $x$ refers to a point in classical spacetime or to the operator that associates
with it under the local correlation map.
In the same spirit, we also refer to the operators as points in spacetime.
This serves the main purpose of making clear that the operators in $\F_n$ should ultimately be thought of as points in spacetime.

With the local correlation map at hand, we introduce the notion of ``spacetime'' in the setting of \ac{cfs} as the support\footnote{
    The \textit{support} of a measure is defined as the complement of the largest open set of measure zero, i.e.
    \begin{align*}
        \supp \rho \defeq \F_n \setminus \bigcup \big\{ \text{$\Omega \subset \F_n$ \,\big|\,$\Omega$ is open and~$\rho(\Omega)=0$} \big\} \:.
    \end{align*}
}
of the measure,
\begin{align*}
    M \defeq \supp \rho.
\end{align*}
Hence, in the context of \ac{cfs} the measure is the fundamental variable and its support is typically only a proper subset
of the set~$\F_n$.
This contrasts to \ac{gr} where the support of the measure is the entire manifold on which the variation principle is
defined.
In particular, the dimension of the spacetime in \ac{cfs} can be much lower than the dimension of the set~$\F_n$
(or the manifold~$\F_n^{\text{reg}}$ for that matter) which in principle can be infinite.
Furthermore, despite the subset~$\F_n^{\text{reg}}$ being a manifold, the spacetime $M$ need not be so, but it can in fact
also have a \textit{discrete structure}.
Here, it is important to remark that we do \textit{not} fix the topology of spacetime a priori, but allow variations over all subset topologies of $\F$.

For further considerations, let $x \in M$ be a spacetime point of \ac{cfs}.
It is convenient to introduce the spin space $S_x$ and its orthogonal projection $\pi_x$ by
\begin{align*}
    S_{x} &\defeq x(\H) \subset \H && \text{subspace of dimension $\leq 2n$} \\
    \pi_{x} &: \H \rightarrow S_{x}\subset \H && \text{orthogonal projection onto $S_{x}$} \:.
\end{align*}
With that notation at hand, we introduce the physical wave functions
\begin{align*}
    \psi^u(x) = \pi_{x} u \quad \text{ for } \quad u \in \H \:.
\end{align*}
Here $u$ is a vector in the Hilbert space~$\H$.
As a result of this construction, for every point~$x \in M$,
the physical wave function $\psi^u(x)$ is a $2n$ dimensional vector that is an element in the vector space $S_{x}$.
One can think of the vector space $S_{x}$ as a fiber space attached to a point $x$ in the operator manifold.
Although physical wave functions are defined directly on the \ac{cfs} spacetime $M$, the local correlation map allows us
to give an explicit realization of the Hilbert space in the classical manifold $\mathcal{M}$ in terms of sections in a suitable fiber bundle.
In summary, the points in the set $M$ are called \textit{spacetime points}, and the entire set $M$ is referred to as \textit{spacetime}.
This notion of spacetime turns out to be quite different from standard Lorentzian spacetime.
Many familiar features, such as the one of a smooth structure, metric, etc.~are absent in the notion of spacetime used in \ac{cfs}.

As noted above, spacetime points $x, y, \dots \in M \subset \F \subset \Lin(\H)$ in \ac{cfs} are linear operators
on $\H$.
They encode the local correlation information and have additional structure compared to classical spacetime points, namely
their eigenvalues and eigenspaces.
We now proceed to summarize those internal structures presented in~\cite{cfs}.
Later in the paper, we build on these structures and try to reformulate them in a language more familiar to those of other fields of physics.
In particular, we focus on the fact that starting from the abstract definition of a \ac{cfs}, one obtains the usual
spacetime structures such as causal relations.
To achieve this, given a \ac{cfs},
additional mathematical objects are introduced that are \textit{inherent} in the sense that they
only use information already encoded in the CFS\footnote{
    One can show that these inherent structures agree with familiar structures in a classical spacetime in suitable limiting cases.
}.
Our aim in the present paper is to expand on these inherent structures to establish a first proposal for a complete ontology of the theory that
does not rely on emergent structures of the continuum limit.

To define a causal structure in spacetime, we study the spectral properties of the operator product~$xy$.
This operator product is still an operator of rank at most~$2n$.
However, in general it is not symmetric (note that~$(xy)^* = yx \neq xy$ unless the
operators commute) and therefore not an element of $\F_n$.
We denote the rank of~$xy$ by~$k \leq 2n$.
Counting algebraic multiplicities, we choose~$\lambda^{xy}_1, \ldots, \lambda^{xy}_{k} \in \C$ as all the non-zero eigenvalues
and set~$\lambda^{xy}_{k+1}, \ldots, \lambda^{xy}_{2n}=0$.
These eigenvalues give rise to the following notion of a causal structure.
\begin{definition}
    \label{def:causalstructure}
    The points~$x,y \in \F_n$ are said to be
    \begin{align*}
        \left\{ \begin{array}{cll}
                    \text{{\textbf{spacelike}} separated} & & \text{if } \abs{\lambda^{xy}_j} = \abs{\lambda^{xy}_k} \text{for all } j, k = 1, \ldots, 2n \\
                    \text{{\textbf{timelike}} separated} & & \text{if } \lambda^{xy}_1, \ldots, \lambda^{xy}_{2n} \text{ are all real} \\
                    & & \text{and } \abs{\lambda^{xy}_j} \neq \abs{\lambda^{xy}_k} \text{ for some } j, k  \\[0.2em]
                    \text{{\textbf{lightlike}} separated} & & \text{otherwise}\:.
        \end{array}\right.
    \end{align*}
\end{definition} \noindent
More specifically, the points~$x$ and~$y$ are lightlike
separated if not all the eigenvalues have the same absolute value and if not all of them are real.
We point out that this notion of causality does not rely on an underlying metric.

We now come to the core of the theory of \ac{cfs}s: the \textit{causal action principle}.
In order to distinguish the physically admissible \ac{cfs}, one must formulate constraints in the form of physical equations.
To this end, we require that the measure $\rho$ be a minimizer of the causal action $\mathcal{S}$ defined by
\begin{align}
    \text{\textit{Lagrangian:}} && \L(x,y) &= \frac{1}{4n} \sum_{i,j=1}^{2n} \left( \abs{\lambda^{xy}_i} - \abs{\lambda^{xy}_j} \right)^2 \label{eq:lagrangian} \\
    \text{\textit{causal action:}} && S[\rho] &= \iint_{\F_n \times \F_n} \L(x,y) \dd\rho(x) \dd\rho(y)
\end{align}
under the following constraints
\begin{align*}
    \text{\textit{volume constraint:}} && \rho(\F_n) &= 1, \\
    \text{\textit{trace constraint:}} && \int_{\F_n} \tr(x) \dd\rho(x) &= 1, \\
    \text{\textit{boundedness constraint:}} && \int_{\F_n} \left( \sum_{j=1}^{2n} \abs{\lambda^{xy}_j} \right)^2 \dd\rho(x) &\leq \text{const}.
\end{align*}
This variational principle is mathematically well-posed if~$\H$ is finite-dimensional (for details, see~\cite[Section~1.1.1]{cfs}).

A minimizer satisfies the \textit{\ac{el} equations} which state that for suitable parameters $\kappa, \mathfrak{r}, \s > 0$
(Lagrange multipliers of the boundedness, trace and volume constraints),
the function $\ell: \F_n \rightarrow \R_0^+$ defined by
\begin{align}
    \label{eq:el-equation-l}
    \ell(x) \defeq \int_M \L(x,y) \dd\rho(y) + \kappa \int_M \left( \sum_{j=1}^{2n} \abs{\lambda^{xy}_j} \right)^2 \dd\rho(y) - \mathfrak{r} \tr x - \s
\end{align}
is minimal and vanishes in spacetime, i.e.
\begin{align}
    \label{eq:euler-largange}
    D\ell|_M = 0 && \ell|_M = 0
\end{align}
These equations describe the dynamics of the \ac{cfs}.
In a suitable limiting case, referred to as the \textit{continuum limit}, the \ac{el} equations
give rise to the equations of motion for the fields in the standard model and \ac{gr}, for details see~\cite{cfs}.

We finally give more details on the notion of causality.
In Definition~\ref{def:causalstructure} we already introduce notions of timelike, spacelike and lightlike separation.
The Lagrangian~\eqref{eq:lagrangian} is compatible with this notion of causality in the following sense.
Suppose that two points $x, y \in \F_n$ are spacelike separated.
Then the eigenvalues $\lambda^{xy}_i$ all have the same absolute value, implying that the Lagrangian~\eqref{eq:lagrangian} vanishes.
As a consequence, points with spacelike separation do not contribute to the action and therefore drop out of the
\ac{el} equation~\eqref{eq:euler-largange}.
This is in correspondence with the usual notion of causality, where points with spacelike separation cannot influence each other.

We finally point out that the causal structure is not necessarily transitive.
This corresponds to our physical conception that the transitivity of the causal relations could be violated both on the
cosmological scale (there might be closed timelike curves) and on the microscopic scale (there seems no compelling reason
why the causal relations should be transitive down to the Planck scale); see, e.g.~\cite{anselmi2007renormalization,anselmi2019fakeons,anselmi2018quantum} for microscopic causality violations
in the standard QFT context.
However, in~\cite{Dappiaggi2020, Finster2023} causal cone structures were constructed, which do give rise to transitive causal relations
(for details, see, in particular,~\cite[section 4.1]{Dappiaggi2020}).
All these causal relations coincide on macroscopic scales much larger than the Planck length.

\section{A Viable Ontology of CFS} \label{sec:ontology-of-cfs}

In the Definition~\ref{def:cfs} of a \ac{cfs} the Hilbert space $\H$ consists only of occupied states of the physical system.
This leads to an unusual situation where the entire Hilbert space is ``occupied''.
Conventionally, one thinks of the state of a physical system to be represented by a single vector in, or a density matrix
on a Hilbert space.
To better align ~\ac{cfs} with conventional thinking, we introduce the following definition of an auxiliary Hilbert space.

\begin{definition}[Auxillary Hilbert space]
    Let $(\H, \F_n, \rho)$ be a \ac{cfs} and $\tilde{\H}$ a Hilbert space with $\H \subset \tilde{\H}$, $\dim {\tilde{\H}} > \dim {\H}$.
    Then $\tilde{\H}$ is called an \textbf{auxiliary Hilbert space} of $\H$.
\end{definition}

Although the addition of an auxiliary Hilbert space might seem arbitrary, it serves an important technical purpose.
Suppose we want to do perturbation theory in \ac{cfs}.
In this case, we start with two \ac{cfs} with Hilbert spaces $\H_1$ and Hilbert spaces $\H_2$.
To be able to formulate one as the perturbation of the other, we need to be able to relate the two Hilbert spaces.
The way to do this is by embedding both of them into one common larger, i.e.~auxiliary Hilbert space.
It has been shown in~\cite{Finster2025} that there exists a canonical construction for this.
Furthermore, in Section~\ref{sec:continuum-limit} we see that in the example of the Minkowski spacetime, this construction
complements the Dirac sea with the positive energy solutions of the Dirac equation.

Given an auxiliary Hilbert space $\tilde{\H}$ we can return to the fundamental Hilbert space of the \ac{cfs} through a projection  $\Omega: \tilde{\H} \to \H$ for which
\begin{align*}
    \tr_{\tilde{\H}} \left[ \Omega \right] = \tr_\H \left[ \id_\H \right] = \dim \H = N \,
\end{align*}
holds.
Since in \ac{cfs} the Hilbert space $\H$ contains only the occupied states, the dimension $N$ of $\H$ can be interpreted as the number of particles constituting the \ac{cfs}.
Therefore, we say that the $N$ particle state $\Omega$ describes the total physical system in the auxiliary Hilbert space.
Hence, $(\H, \Omega)$ can be thought of as a many-body quantum system represented by the maximally mixed
density matrix $\frac{\Omega}{N}$ on $\tilde{\H}$.

These concepts can be generalized to arbitrary projections onto subspaces of $\H$.
\begin{definition}[(Sub)System]
    Let $(\H, \F_n, \rho)$ be a~\ac{cfs}, $\tilde{\H}$ an auxiliary Hilbert space and $\H_A \subseteq \H$ a sub-Hilbert space of $\H$.
    Then a \textbf{subsystem} is the tuple $(\H_A, \omega_A)$, where $\omega_A: \tilde{\H} \to \H_A$ is the projection onto $\H_A$.
    In addition, the \textbf{particle number} of a subsystem $(\H_A, \omega_A)$ is defined as $\tr_{\tilde{\H}} \omega_A$.
    We call $(\H, \Omega)$ the \textbf{total system}.
\end{definition}

By this definition, each state (normed vector) $\ket{u} \in \H$ corresponds to a one-particle subsystem $(\H_u, \omega_u) = (\Span\left\{ \ket{u} \right\}, \ket{u}\bra{u})$.
In particular, since $\H$ is a Hilbert space, it admits an orthonormal basis $\left( \ket{u_i} \right)$.
Thus, $\Omega$ can be expressed as
\begin{align}
    \Omega = \sum_{i = 1}^N \ket{u_i}\bra{u_i}.
\end{align}
For notational clarity, we denote the one-particle subsystem corresponding to the $i$-th basis vector by $(\H_i, \omega_i)$.

This reformulation as a quantum many-body system, which is used for the description of condensed matter physics, connects
the \ac{cfs} picture of the Dirac sea with the more familiar picture of solid-state theory.

In a crystalline material, you have a band structure of admissible states in the so-called Brillouin zone, which is the
Fourier transform of the lattice cell.
These bands are then filled with electrons.
At zero temperature, the filled states are just the $N$ lowest energy states.
The state with the highest energy is then referred to as the Fermi energy of the material; see, e.g.~\cite{solidstate,sigrist2013solid,wen2004quantum} for an introduction and further references.
The material properties, e.g.~when heated of under an external potential, are then largely determined by the properties
of the band structure near the Fermi energy.
In the case of Minkowski space as a \ac{cfs} which we discuss below, the mass-shells of the different fermionic generations
play the role of the band structure in a ``Brillouin zone'' corresponding to a cell with infinite size\footnote{This suggests that \ac{cfs} is firmly in the ``emergent gravity'' sector of approaches to unification~\cite{crowther2019renormalizability}. However, it is not entirely clear to us, how the structures of \ac{cfs} compare to those a set of ideas referred to as ``condensed matter'' approach~\cite{bain2014three}, first put forward by Sakharov~\cite{sakharov2000vacuum}, the discussion of which has been translated into modern laguage by Visser in \cite{visser2002sakharov}. As pointed out by~\cite{bain2014three}, ``according to Wen \cite[p.341]{wen2004quantum}, topological orders characterize 
any condensate with ground states that possess a finite energy gap''. This is interesting in the context of \ac{cfs}, given the fact that in the continuum limit described by the Dirac sea, the minimizer of the causal action principle actually possesses a finite energy gap, namely the mass-gap, and the Dirac sea can be thought of as a fermionic condensate.}.

One peculiarity of the solid-state picture that shows up here as well is the fact that one is effectively working in a
first quantized picture.
That means that we do not have the antisymmetric structure available, which in the case of the fermionic Fock space of QFT
guarantees that the Pauli exclusion principle is obeyed.
In solid-state theory, this is instead put in more or less by hand when the states in the band structure are filled by
assuming a Fermi-Dirac distribution of the occupied states.
This is similar to the way the electron shells are ``filled'' in atomic physics.

In \ac{cfs} we employ a similar first quantized picture in the derivation of the classical continuum limit.
However, as we show now, thanks to the way the theory is built, the Pauli exclusion principle is, in fact, baked into the
elementary definition of the theory.
Looking at the \ac{cfs} as an $N$ particle quantum system $(\H, \Omega)$ with $ \tr_{\tilde{\H}} \left[ \Omega \right] = N$ we can
ask ourselves what the occupation number of any state $\ket{u} \in \tilde{\H}$ is, i.e.~what the probability is to find a
particle in that state.

\begin{definition}[Occupation number]
    Let $\ket{u} \in \tilde{\H}$ be a state and $(\H_A, \omega_A)$ be a (sub)system, then the \textbf{occupation number} of $\ket{u}$ in $(\H_A, \omega_A)$ is defined as $\braket{u}{\omega_A u}$.
\end{definition}

Based on this definition, the following proposition shows immediately that, independent of the choice of an auxiliary Hilbert space, all \ac{cfs}
satisfy the Pauli exclusion principle.

\begin{proposition}[Pauli Exclusion Principle]
    Let $\ket{u} \in \tilde{\H}$ be a state and $(\H_A, \omega_A)$ a (sub)system, then the occupation number satisfies
    \begin{align}
        0 \leq \braket{u}{\omega_A u} \leq 1.
    \end{align}
\end{proposition}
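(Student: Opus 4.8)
The plan is to use nothing beyond the defining property of $\omega_A$ as an \emph{orthogonal} projection onto $\H_A$, namely that it is self-adjoint and idempotent, $\omega_A = \omega_A^* = \omega_A^2$, together with the fact that a state $\ket{u}$ is by definition a normed vector, so that $\braket{u}{u} = 1$. The whole argument then reduces to rewriting the occupation number $\braket{u}{\omega_A u}$ as a squared norm, after which both inequalities fall out at once.

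First I would dispatch the lower bound. Using idempotency and self-adjointness in turn,
\begin{align*}
    \braket{u}{\omega_A u} = \braket{u}{\omega_A^2 u} = \braket{\omega_A u}{\omega_A u} = \norm{\omega_A u}^2 \geq 0,
\end{align*}
so that $0 \leq \braket{u}{\omega_A u}$ requires no further input.

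For the upper bound I would decompose $\ket{u} = \omega_A \ket{u} + (\id_{\tilde{\H}} - \omega_A)\ket{u}$. Because $\omega_A$ is the orthogonal projection, the two summands are orthogonal, and the Pythagorean identity yields $\norm{u}^2 = \norm{\omega_A u}^2 + \norm{(\id_{\tilde{\H}} - \omega_A) u}^2 \geq \norm{\omega_A u}^2$. Since $\ket{u}$ is a state we have $\norm{u}^2 = 1$, and combining this with the identity $\braket{u}{\omega_A u} = \norm{\omega_A u}^2$ already computed gives $\braket{u}{\omega_A u} \leq 1$.

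I do not expect any genuine obstacle, as the claim is an elementary property of orthogonal projections; the only points that demand attention are purely bookkeeping. One must confirm that $\omega_A$ is indeed the orthogonal (hence self-adjoint, idempotent, norm-nonincreasing) projection onto $\H_A$ fixed in the definition of a subsystem, and that the word ``state'' carries the normalization $\norm{u} = 1$ adopted when one-particle subsystems were introduced. The independence from the choice of auxiliary Hilbert space asserted in the surrounding discussion is then automatic, since the argument refers to $\tilde{\H}$ only as the ambient space on which $\omega_A$ acts.
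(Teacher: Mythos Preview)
Your proof is correct and essentially the same as the paper's: the paper obtains the upper bound by noting that the operator norm of the projection $\omega_A$ equals $1$, while you unpack this same fact explicitly via the Pythagorean decomposition $\norm{u}^2 = \norm{\omega_A u}^2 + \norm{(\id_{\tilde{\H}} - \omega_A)u}^2$. The lower bound in both arguments is simply the positivity of $\omega_A$, which you make explicit through the identity $\braket{u}{\omega_A u} = \norm{\omega_A u}^2$.
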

\begin{proof}
    Since $\omega_A$ is a projection, the operator norm satisfies
    \begin{align}
        \opnorm{\omega_A} \defeq& \sup \left\{ \braket{u}{\omega_A u} : \ket{u} \in \tilde{\H}, \braket{u}{u} = 1 \right\} \nonumber \\
        =& \sup \left\{ \braket{u}{u} : \ket{u} \in \H_A, \braket{u}{u} = 1 \right\} = 1.
    \end{align}
    Thus, $\braket{u}{\omega_A u} \leq \opnorm{\omega_a} \braket{u}{u} = 1$.
    The non-negativity of the occupation number follows directly from the fact that the projection $\omega_A$ is a positive
    operator.
\end{proof}

For any state $\ket{u} \in \tilde{\H} \setminus \H$ ``outside'' the total system $(\H, \Omega)$,
the occupation number is obviously zero for all (sub)systems $(\H_A, \omega_A)$.
Only states, which contain contributions from $\H$, have non-zero occupation number in $(\H, \Omega)$.
More precisely, every state $\ket{u} \in \H$ has occupation number $1$ in $(\H, \Omega)$, i.e.~all states in $\H$ are occupied.
In particular, this statement holds even though $\Omega$ is not a density matrix, but an $N$-particle state with $N = \tr \Omega$.

The fact that the construction of a \ac{cfs} guarantees that the Pauli exclusion principle is satisfied when we interpret the
basis vectors in $\H$ as occupied particle states in $\tilde{\H}$ justifies thinking of a \ac{cfs} as a quantum system
made up of $N$ fermions\footnote{
    Under the assumption that every state in $\H$ is represented by a physical wavefunction and every state in $\tilde{\H} \setminus \H$ cannot be represented by a physical wavefunction, our definition of occupation number agrees with that in~\cite[Section 5.9]{Finster2024}, for vectors in $\H$ and $\tilde{\H} \setminus \H$ respectively.
    Since our setup also allows for (normed) linear combinations of states in $\H$ and $\tilde{\H} \setminus \H$ our occupation number takes values in $[0,1]$.
    Thus, our definition aligns better with the idea of an underlying Fermi-Dirac distribution. 
    In the context of our discussion in Section~\ref{subsec:reference frames} this might allow for a basis-dependent notion of a temperature for a \ac{cfs} akin to the Unruh effect~\cite{Unruh}. The analysis in~\cite{finster2017fermionic} could serve as a starting point for this investigation. 
}. Furthermore the fact that the above proof guarantees, that any \ac{cfs} respects the Pauli exclusion principle, means that the theory is compatible with any experimental bounds obtained, e.g., by the VIP collaboration~\cite{baudis2024search,marton2015high,piscicchia2022strongest,piscicchia2023experimental} on hypothetical violations thereof. 

In contrast, this restriction of the occupation number does not apply to the emergent fields, i.e.~all the (bosonic) interaction fields as they emerge as a
linearized perturbation of the fermionic $N$-particle \ac{cfs}.
Accordingly, the occupation number of such perturbations is not constrained from within the theory itself.
This explains why particles corresponding to an element in $\H$ are subject to constraints that do not apply to emergent
fields, such as photons\footnote{
    At present it is unclear whether \ac{cfs} provides a fundamental reason why the physical wave functions of elements in $\H$
    have to satisfy a Dirac equation in the effective description of the continuum limit, it just turns out to be the case in
    the known examples.
    Hence, for the moment these are just two coinciding facts looking for a joint explanation.
    One possible approach is to try and apply the spin-statistics theorem ``in reverse'' to conclude from the statistics of the physical wave functions in the continuum limit in Minkowski space that they have to correspond to spin $\frac{1}{2}$ particles.
}.

\subsection{Observables}\label{subsec:observables}

Throughout this section, we work with the physical Hilbert space $\H$.
One could equally well work with the auxiliary Hilbert space $\tilde{\H}$ without affecting the results as should be
for an auxiliary structure.

The key question for any physical theory is what its observables are.
As mentioned in Section~\ref{sec:preliminaries} the goal of the present paper is to work entirely with structures inherent
to \ac{cfs}.
In~\cite{dzhunushaliev2023quantum, dzhunushaliev2023quantization} operator-valued measures were considered.
These come naturally with the inherent structures of \ac{cfs} which leads to the following definition\footnote{
    There are more general observables one could define, using the spacetime point operators and its eigenspaces,
    e.g.~$\tilde{O}(\mathcal{U}) =  \int_\mathcal{U} x\;  \dd \rho(x)$.
    However, for our purposes here we currently believe that the fermionic position observables are sufficient for a complete ontology.
    In Section~\ref{sec:causalcorrelations} we additionally introduce the causal correlation operators.
    However, it is unclear whether they are formally observables as in general they are not self-adjoint operators despite having real eigenvalues, furthermore they lack some of the important properties of these position observables.
}.

\begin{definition}[Position Observables] \label{def:observables}
    Let $(\H, \F_n, \rho)$ be a \ac{cfs} and $\mathcal{U} \subseteq \F_n$, then the \textbf{observable} of the region $\mathcal{U}$ is the operator
    \begin{align}
        \mathcal{O}(\mathcal{U}) \defeq \int_\mathcal{U} \pi_x \dd \rho(x).
    \end{align}
    The \textbf{expectation value} of an operator $\mathcal{O}: \H \to \H$ for a (sub-) system $(A, \omega_A)$ is defined as
    \begin{align}
        \expval{\mathcal{O}}_{\omega_A} \defeq \frac{1}{2n} \tr_{\H} \left[ \omega_A \mathcal{O} \right].
    \end{align}
\end{definition}
For a state $\ket{u} \in \H$, the expectation value of an operator $\mathcal{O}$ conincides up to a factor of $\frac{1}{2n}$
with the standard notion:
\begin{align}
    \expval{\mathcal{O}}_{u} \defeq \expval{\mathcal{O}}_{\omega_u} = \frac{1}{2n} \tr_\H (\omega_u \mathcal{O}) = \frac{\braket{u}{\mathcal{O} u}}{2n}.
\end{align}

In addition, these definitions of observables and expectation values have the following properties.
\begin{proposition}
    \label{prop:positivity-of-the-spacetime-observables}
    Let $\mathcal{U} \subseteq \F_n$, then the observable of $\mathcal{U}$ satisfies
    \begin{enumerate}
        \item $\displaystyle \expval{\mathcal{O}(\mathcal{U})}_{\omega_A} \geq 0$ for every subsystem $(\H_A, \omega_A)$,
        \item $\displaystyle \mathcal{O}(\mathcal{U}) = 0 \Leftrightarrow M \cap \mathcal{U} = 0$.
    \end{enumerate}
\end{proposition}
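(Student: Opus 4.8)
The plan is to prove the two assertions separately, both resting on the elementary facts that each $\pi_x$ is an orthogonal projection, hence a positive operator, and that $\rho$ is a positive (finite) measure. Throughout I tacitly assume $\mathcal{U}$ is $\rho$-measurable so that the defining integral makes sense.

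For the first assertion I would start from the definitions and interchange trace and integral by linearity, writing $\expval{\mathcal{O}(\mathcal{U})}_{\omega_A} = \frac{1}{2n}\tr_\H[\omega_A \mathcal{O}(\mathcal{U})] = \frac{1}{2n}\int_\mathcal{U} \tr_\H[\omega_A \pi_x]\dd\rho(x)$. Since $\omega_A$ is an orthogonal projection we have $\omega_A^2 = \omega_A$ and $\omega_A = \omega_A^*$, so by cyclicity $\tr_\H[\omega_A \pi_x] = \tr_\H[\omega_A \pi_x \omega_A] \geq 0$, the operator $\omega_A \pi_x \omega_A$ being positive. The integrand is thus pointwise non-negative and $\rho$ is positive, so the expectation value is non-negative. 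The only point requiring care is the legitimacy of exchanging trace and Bochner integral, which is justified because $\tr_\H \pi_x = \dim S_x \leq 2n$ is uniformly bounded and $\rho$ is finite.

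For the second assertion the key structural observation is that $\mathcal{O}(\mathcal{U})$ is itself a positive, trace-class operator with $\tr_\H \mathcal{O}(\mathcal{U}) = \int_\mathcal{U}\dim S_x\dd\rho(x) \leq 2n$. A positive trace-class operator vanishes if and only if its trace vanishes, so $\mathcal{O}(\mathcal{U}) = 0 \Leftrightarrow \int_\mathcal{U}\dim S_x\dd\rho(x)=0$. Because every genuine spacetime point satisfies $x\neq 0$, and hence $S_x = x(\H)\neq\{0\}$ with $\dim S_x \geq 1$, the non-negative integrand is bounded below by $1$ on $M$; the integral therefore vanishes precisely when $\rho(\mathcal{U}\cap M)=0$. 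The direction $M\cap\mathcal{U}=0 \Rightarrow \mathcal{O}(\mathcal{U})=0$ is then immediate, since $\mathcal{U}\subseteq\F_n\setminus M$ lies in the complement of the support, an open set of $\rho$-measure zero, so the integral runs over a null set.

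The reverse direction is where I expect the real work. From $\mathcal{O}(\mathcal{U})=0$ one only extracts $\rho(\mathcal{U}\cap M)=0$, a statement modulo null sets, whereas the claim is set-theoretic. To bridge this gap I would invoke the defining property of the support: every $x\in M=\supp\rho$ has the feature that each of its open neighborhoods carries positive $\rho$-measure. Hence, for open $\mathcal{U}$, were $M\cap\mathcal{U}$ nonempty it would contain a support point whose neighborhood $\mathcal{U}$ satisfies $\rho(\mathcal{U})>0$, and since $\rho(\mathcal{U}\setminus M)=0$ this forces $\rho(\mathcal{U}\cap M)>0$, contradicting $\mathcal{O}(\mathcal{U})=0$. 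The main obstacle I would flag explicitly is precisely this reconciliation: for a general non-open $\mathcal{U}$ the emptiness of $M\cap\mathcal{U}$ cannot follow, since a single support point of $\rho$-measure zero is a counterexample, so one must either read the right-hand side as $\rho(M\cap\mathcal{U})=0$ or restrict to open regions, and one must assume enough regularity of $\rho$ that the complement of its support is genuinely null.
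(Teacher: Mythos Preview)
Your approach is essentially the same as the paper's, with one part handled more carefully.

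For part (1), the paper picks a state $\ket{u}$, rewrites the integrand as $\norm{\psi^u(x)}^2 \geq 0$, and then remarks that any subsystem is a sum of one-particle systems. Your argument via $\tr_\H[\omega_A\pi_x] = \tr_\H[\omega_A\pi_x\omega_A] \geq 0$ is the same computation at the operator level, avoiding the choice of a basis for $\H_A$; the two are equivalent.

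For part (2), the paper argues exactly along your lines but more briskly: from $\pi_x \neq 0$ it concludes $\rho(\mathcal{U}) = 0$, and then simply asserts ``Thus, $\mathcal{U}$ and $M = \supp\rho$ are disjoint.'' You have correctly isolated the subtlety in that last step: the implication $\rho(\mathcal{U}) = 0 \Rightarrow \mathcal{U}\cap\supp\rho = \emptyset$ requires $\mathcal{U}$ to be open, and for arbitrary measurable $\mathcal{U}$ one only obtains $\rho(\mathcal{U}\cap M) = 0$. The paper does not address this point. Your diagnosis and proposed remedies (restrict to open $\mathcal{U}$, or read the right-hand side as a measure-zero statement) are the right ones, and in that respect your treatment is more careful than the paper's own proof.
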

\begin{proof}
    ``(1)'': Let $\ket{u} \in \H$ be a state, then
    \begin{align*}
        \expval{O(\mathcal{U})}_{u} &= \frac{1}{2n} \braket{u}{O(\mathcal{U}) u} \\
        &= \frac{1}{2n} \int_\mathcal{U} \braket{u}{\pi_x u} \dd \rho(x) = \frac{1}{2n} \int_\mathcal{U} \braket{\pi_x u} \dd \rho(x)\\
        &= \frac{1}{2n} \int_\mathcal{U} \braket{\psi^{u}(x)} \dd \rho(x) = \frac{1}{2n} \int_\mathcal{U} \norm{\psi^{u}(x)}^2 \dd \rho(x) \geq 0
    \end{align*}
    Since any subssystem can be expressed as a sum of one particle systems, we have $\expval{\mathcal{O}(\mathcal{U})}_{\omega_A} \geq 0$.

    \noindent
    ``(2)'': Since $\pi_x \neq 0$, we have that $\mathcal{O}(\mathcal{U}) = 0$ implies $\mathcal{U}$ is a $\rho$-zero set,
    i.e.~$\rho(\mathcal{U}) = 0$.
    Thus, $\mathcal{U}$ and $M = \supp \rho$ are disjoint.
    On the other hand, if $\mathcal{U} \cap M = \emptyset$, we have
    \begin{align*}
        \mathcal{O}(\mathcal{U}) = \int_{\mathcal{U}} \pi_x \dd \rho(x) = \int_{\mathcal{U} \cap M} \pi_x \dd \rho(x) = 0.
    \end{align*}
\end{proof}

Property (2) is an important consistency check that we are indeed dealing with a sensible definition of physical observables.
It guarantees that all position observables are trivial outside of spacetime.
Hence, the expectation value of observing any particles outside of spacetime is zero, leading to the following corollary.
\begin{corollary}
    Let $\mathcal{U} \subset {\F_n}$ be an arbitrary subset then we have that for every subsystem $(\H_A, \omega_A)$
    \begin{align}
        \expval{\mathcal{O}(\mathcal{U})}_{\omega_A} = \expval{\mathcal{O}(\mathcal{U} \cap M)}_{\omega_A}
    \end{align}
\end{corollary}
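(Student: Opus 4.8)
The plan is to exploit the additivity of the integral defining $\mathcal{O}$ together with the fact that the measure $\rho$ is concentrated on its support $M = \supp\rho$; the statement is then essentially a one-line consequence of the mechanism already established in Proposition~\ref{prop:positivity-of-the-spacetime-observables}.

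First I would decompose the region as the disjoint union $\mathcal{U} = (\mathcal{U} \cap M) \sqcup (\mathcal{U} \setminus M)$. Additivity of the operator-valued integral over disjoint sets then yields
\begin{align*}
    \mathcal{O}(\mathcal{U}) = \int_{\mathcal{U}} \pi_x \dd\rho(x) = \int_{\mathcal{U}\cap M} \pi_x \dd\rho(x) + \int_{\mathcal{U}\setminus M} \pi_x \dd\rho(x) = \mathcal{O}(\mathcal{U}\cap M) + \mathcal{O}(\mathcal{U}\setminus M).
\end{align*}
Next I would argue that the second summand is the zero operator. The set $\mathcal{U}\setminus M$ is contained in $\F_n \setminus \supp\rho$, which by the definition of the support is a union of open $\rho$-null sets and is therefore itself a $\rho$-null set. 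Since $\opnorm{\pi_x} \leq 1$ uniformly in $x$, integrating the (uniformly bounded) integrand over a $\rho$-null set gives $\mathcal{O}(\mathcal{U}\setminus M) = 0$. This is exactly the reasoning used in part (2) of Proposition~\ref{prop:positivity-of-the-spacetime-observables}, so I would invoke it directly rather than redo it.

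Finally, applying the linear expectation-value functional $\expval{\,\cdot\,}_{\omega_A} = \frac{1}{2n}\tr_\H[\omega_A\,\cdot\,]$ to the decomposition and using $\mathcal{O}(\mathcal{U}\setminus M) = 0$ gives $\expval{\mathcal{O}(\mathcal{U})}_{\omega_A} = \expval{\mathcal{O}(\mathcal{U}\cap M)}_{\omega_A} + \expval{\mathcal{O}(\mathcal{U}\setminus M)}_{\omega_A} = \expval{\mathcal{O}(\mathcal{U}\cap M)}_{\omega_A}$, which is the claim. The only genuinely nontrivial point — and hence the main obstacle — is the assertion that the complement of the support is $\rho$-null, i.e.~that $\rho$ truly lives on $M$. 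This is the standard fact that a Borel measure is concentrated on its support, which holds once the ambient topology is second-countable (Lindelöf), so that an arbitrary union of open null sets collapses to a countable one of total measure zero. Since $\F_n$ carries the norm topology inherited from $\Lin(\H)$ for separable $\H$, this countability hypothesis should be confirmed; as it is already used implicitly in the preceding proposition, I would treat it as established, leaving the rest of the argument immediate from linearity.
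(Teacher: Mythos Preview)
Your proposal is correct and matches the intended argument: the paper states the corollary without a separate proof, treating it as an immediate consequence of part~(2) of Proposition~\ref{prop:positivity-of-the-spacetime-observables}, which is exactly the mechanism you invoke after the disjoint decomposition. Your additional remarks on second-countability and $\rho$ being concentrated on its support are more careful than the paper itself, but they only make explicit what is already implicit there.
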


This corollary allows us in the following considerations to choose $\mathcal{U} \subset {\F_n}$ without regard to whether
$\mathcal{U}$ is a proper subset of spacetime $M$.
Additionally, we find that the expectation value of the spacetime region observable $\mathcal{O}(\mathcal{U})$ is a non-negative
real number.

From a physical point of view, the integral $\int_\mathcal{U} \norm{\psi^{u}(x)}^2 \dd \rho(x)$ as it appears in the proof
of proposition~\ref{prop:positivity-of-the-spacetime-observables} reminds us of Born's rule, where the integrand
$\norm{\psi(x)}^2$ is interpreted as the probability density of finding a particle with wave-function $\psi(x)$ at a spacetime point $x$.
However, in contrast to Born's rule where one only considers spatial hyper-surfaces, $\mathcal{U}$ is a general spacetime region.
In the continuum limit, a spatial hypersurface has measure zero.
However, as shown in~\cite{finster2013non}, for solutions of the Dirac equation there is a relation between a spacetime inner product and a hypersurface scalar
product for spinor fields.
This raises the following conjecture in the continuum limit.

\begin{conjecture}[Born's Rule]
    Let $(\mathcal{M}, g)$ be a classical globally hyperbolic spacetime.
    Let $B$ be a compact simply connected subset of a Cauchy surface $\mathcal{N}$.
    Suppose that $\mathcal{U} = D(B)$ is the domain of dependence of $B$.
    Then~\cite[Proposition 3.1. ]{finster2013non} can be applied to pull back the spacetime observable of $\mathcal{U}$ to an observable on $B$.
\end{conjecture}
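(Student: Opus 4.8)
The plan is to run the argument in the continuum limit of Section~\ref{sec:continuum-limit}, where $\rho$ is the push-forward $F_*(\sqrt{\abs{g}}\,\dd^4x)$ of the spacetime volume measure under the local correlation map and the physical wave functions $\psi^u(x)=\pi_x u$ are identified with solutions of the Dirac equation on $(\mathcal{M},g)$. Under this identification the computation in the proof of Proposition~\ref{prop:positivity-of-the-spacetime-observables} turns the expectation value into a genuine spacetime integral
\begin{align*}
    \expval{\mathcal{O}(\mathcal{U})}_{u} = \frac{1}{2n}\int_{D(B)}\norm{\psi^u(x)}^2\,\dd\mu(x),
\end{align*}
taken over the causal diamond $\mathcal{U}=D(B)$. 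Since $B$ is compact, its domain of dependence is relatively compact, so this integral converges without any of the regularization needed for integrals over all of $\mathcal{M}$. First I would make the identification of the integrand $\norm{\psi^u(x)}^2$ with a pointwise spinorial density fully explicit, as the exact form of this density is what Proposition~3.1 will act on.

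The heart of the argument is to collapse this four-dimensional integral onto the three-dimensional slice $B$. Here the hypothesis $\mathcal{U}=D(B)$ is indispensable: by finite propagation speed for the Dirac equation, every solution is determined inside $D(B)$ by its Cauchy data on $B$, so no data from outside $B$ can enter. I would then invoke \cite[Proposition~3.1]{finster2013non}, which relates the spacetime inner product $\int_{\mathcal{M}}\spinbraket{\psi}{\phi}\,\dd\mu$ to the conserved Cauchy-surface scalar product $\int_{\mathcal{N}}\spinbraket{\psi}{\slashed{\nu}\phi}\,\dd\mu_{\mathcal{N}}$, applied in the form localized to the relatively compact region $D(B)$. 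Specializing to the diagonal $\phi=\psi$ and to the diamond $D(B)$ then represents $\int_{D(B)}\norm{\psi^u(x)}^2\,\dd\mu$ as the integral of a current density over $B$ alone.

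From this identity I would read off the pulled-back observable: Proposition~3.1 yields a positive operator $\mathcal{O}_B$ on the space of Cauchy data on $B$ with $\expval{\mathcal{O}(D(B))}_{\omega_A}=\expval{\mathcal{O}_B}_{\omega_A}$ for every subsystem $(\H_A,\omega_A)$, where positivity is inherited from part~(1) of Proposition~\ref{prop:positivity-of-the-spacetime-observables}. The density integrated over $B$ then occupies precisely the role of the probability density in the textbook statement of Born's rule on a spatial slice, which is the assertion of the conjecture. To finish I would check that $\mathcal{O}_B$ does not depend on the auxiliary data used to set up Proposition~3.1, in keeping with the auxiliary-structure philosophy of Section~\ref{subsec:observables}.

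The hard part will be reconciling the two inner products that appear. The \ac{cfs} observable is built from the positive-definite Hilbert norm $\norm{\psi^u(x)}^2=\braket{u}{\pi_x u}$, whereas \cite{finster2013non} is formulated through the indefinite spin scalar product $\spinbraket{\cdot}{\cdot}_x$ and the current operator $\slashed{\nu}$. Matching them requires controlling, in the continuum limit, how the orthogonal projection $\pi_x$ onto $S_x$ relates to the indefinite structure carried by the operator $x$ itself via the closed chain; this is exactly the point at which the continuum-limit regularization and the positive-energy character of the occupied states must be exploited, and I expect it to be the genuine obstacle rather than a routine check. A secondary technical issue is to confirm that the diamond-restricted form of Proposition~3.1 holds with the boundary contributions on the Cauchy horizon $\partial D(B)$ vanishing, which is where compactness and simple connectedness of $B$ should enter.
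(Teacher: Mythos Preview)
The paper contains no proof of this statement: it is explicitly labeled a \emph{conjecture}, introduced with the phrase ``This raises the following conjecture in the continuum limit,'' and the text that follows simply explains what consequences the conjecture \emph{would} have (``This conjecture would allow for a direct connection with Born's rule'') without attempting to establish it. There is therefore nothing to compare your proposal against.

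That said, your outline is a sensible sketch of how one might attack the problem, and you correctly identify the real obstruction. The mismatch you flag in your final two paragraphs --- that the observable $\mathcal{O}(\mathcal{U})$ is built from the positive-definite Hilbert-space norm $\braket{u}{\pi_x u}$, whereas \cite{finster2013non} relates the \emph{indefinite} spin inner product $\spinbraket{\psi}{\phi}_x$ to the current integral $\int_{\mathcal{N}}\spinbraket{\psi}{\slashed{\nu}\phi}\,\dd\mu_{\mathcal{N}}$ --- is precisely why the authors state this as a conjecture rather than a proposition. Your proposed fix (``controlling, in the continuum limit, how the orthogonal projection $\pi_x$ onto $S_x$ relates to the indefinite structure carried by the operator $x$ itself'') is not routine: the relation between $\pi_x$ and $x|_{S_x}$ depends on the regularization, and Proposition~3.1 of \cite{finster2013non} in its stated form applies to the spacetime inner product $\int_{\mathcal{M}}\spinbraket{\psi}{\phi}\,\dd\mu$, not to $\int_{\mathcal{M}}\braket{\psi}{\pi_x\phi}\,\dd\rho$. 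Until that gap is closed, what you have is a plausible strategy with an honestly acknowledged hole, not a proof --- which is consistent with the paper's own assessment.
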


This conjecture would allow for a direct connection with Born's rule.
This conjecture would also shine light on the problem of localization in quantum mechanics; see e.g.~\cite{hegerfeldt1974remark,skagerstam,finster2023incompatibility, castrigiano2017dirac,perez1977localization, beck2021local}.
Only a very distinct combination of spacetime observable and choice of hypersurface allows for an interpretation of an
observable localized in space in $B$.
For generic $\mathcal{U}$ and generic Cauchy surfaces, this identification fails.
\begin{proposition}
    For a measurable subset of regular correlation operators $\mathcal{U} \subset{\F_n}^{\text{reg}}$, the expectation value of
    the position observable $\mathcal{O}(\mathcal{U})$ for the total system $(\H, \Omega)$ is the volume of the spacetime region $\mathcal{U} \cap M$
    \begin{align}
        \expval{\mathcal{O}(\mathcal{U})}_{\Omega} = \rho(\mathcal{U}).
    \end{align}
\end{proposition}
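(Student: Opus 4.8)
The plan is to unfold the definition of the expectation value for the total system and reduce everything to the pointwise trace of the projections $\pi_x$. First I would write
\begin{align*}
    \expval{\mathcal{O}(\mathcal{U})}_{\Omega} = \frac{1}{2n} \tr_\H\left[ \Omega\, \mathcal{O}(\mathcal{U}) \right],
\end{align*}
and observe that since $\Omega = \sum_{i=1}^N \ket{u_i}\bra{u_i}$ is the projection onto $\H$, its restriction to $\H$ is $\id_\H$. Because $\mathcal{O}(\mathcal{U})$ maps $\H$ into $\H$ (each $\pi_x$ has image $S_x \subset \H$), the factor $\Omega$ acts trivially under $\tr_\H$, so that $\tr_\H[\Omega\,\mathcal{O}(\mathcal{U})] = \tr_\H[\mathcal{O}(\mathcal{U})]$.

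Next I would interchange the trace with the operator-valued integral defining the observable, giving
\begin{align*}
    \tr_\H[\mathcal{O}(\mathcal{U})] = \tr_\H\left[ \int_\mathcal{U} \pi_x \dd\rho(x) \right] = \int_\mathcal{U} \tr_\H[\pi_x] \dd\rho(x).
\end{align*}
The decisive input is then the hypothesis $\mathcal{U} \subset \F_n^{\text{reg}}$: for a regular point the operator $x$ has exactly $n$ positive and $n$ negative eigenvalues, so the spin space $S_x = x(\H)$ has dimension exactly $2n$ and the orthogonal projection $\pi_x$ onto it satisfies $\tr_\H[\pi_x] = \dim S_x = 2n$. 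Substituting this constant value yields $\int_\mathcal{U} 2n \dd\rho(x) = 2n\,\rho(\mathcal{U})$, and dividing by $2n$ gives $\expval{\mathcal{O}(\mathcal{U})}_{\Omega} = \rho(\mathcal{U})$. Since $\rho$ is supported on $M$, one automatically has $\rho(\mathcal{U}) = \rho(\mathcal{U} \cap M)$, matching the stated interpretation as the volume of $\mathcal{U} \cap M$.

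The main obstacle I anticipate is purely technical: rigorously justifying the exchange of the trace with the $\rho$-integral of the operator-valued function $x \mapsto \pi_x$, which requires checking that this map is $\rho$-measurable and integrable in a sense compatible with the continuous linear functional $\tr_\H$. A clean way to sidestep these subtleties is to compute $\tr_\H[\mathcal{O}(\mathcal{U})]$ directly against the orthonormal basis $(\ket{u_i})$ of $\H$, reusing the identity $\braket{u_i}{\pi_x u_i} = \norm{\psi^{u_i}(x)}^2$ from the proof of Proposition~\ref{prop:positivity-of-the-spacetime-observables}, and then summing $\sum_i \norm{\psi^{u_i}(x)}^2 = \tr_\H[\pi_x] = 2n$ pointwise before integrating; equivalently one may invoke linearity of the expectation value in the state together with $\Omega = \sum_i \omega_i$. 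The regularity assumption is essential throughout: on the non-regular part of $\F_n$ the rank of $\pi_x$ can drop below $2n$, which would spoil the clean identification of $\expval{\mathcal{O}(\mathcal{U})}_{\Omega}$ with $\rho(\mathcal{U})$.
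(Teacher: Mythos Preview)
Your proposal is correct and follows essentially the same approach as the paper: expand the expectation value, reduce to $\frac{1}{2n}\int_{\mathcal{U}} \tr_\H[\pi_x]\,\dd\rho(x)$, and use regularity to evaluate $\tr_\H[\pi_x]=2n$. The paper's proof is a one-line computation that leaves the justification of the trace--integral exchange and the role of the regularity hypothesis implicit, whereas you spell both out carefully; your alternative via the basis sum $\sum_i \norm{\psi^{u_i}(x)}^2$ is exactly what the paper alludes to with ``summing over a basis of $\H$ at every point.''
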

\begin{proof}
    The proof is a straightforward computation
    \begin{align*}
        \expval{\mathcal{O}(\mathcal{U})}_{\Omega} = \frac{1}{2n} \int_{\mathcal{U}} \tr \pi_x \dd \rho(x) = \rho(\mathcal{U})
    \end{align*}
    using the fact that summing over a basis of $\H$ at every point precisely gives us the trace of the projection operator.
\end{proof}
This result suggests thinking of the states as carriers of spacetime volume.
The higher the expectation value to observe a particle in a certain spacetime region $\mathcal{U}$, the bigger the total
volume of the said spacetime region.

\subsection{Spacetime Superposition}\label{subsec:spacetime-superposition}

This observation above leads us to introduce the idea of the one-particle measure and its associated one-particle spacetime.

\begin{definition}\label{def:subsystem}
    Let $(\H_A, \omega_A)$ be a (sub-) system, then the measure assigned to this (sub-) system is defined by
    \begin{align}
        \rho_{\omega_A} (\mathcal{U}) \defeq \expval{\mathcal{O}(\mathcal{U})}_{\omega_A} \text{ for measureable } \mathcal{U} \subseteq \F
    \end{align}
    and the corresponding spacetime is defined by
    \begin{align}
        M_{\omega_A} \defeq \supp \rho_{\omega_A}.
    \end{align}
\end{definition}

Similarly to the notation of the one-particle subsystems, we denote the \textbf{one-particle measure} and \textbf{one-particle spacetime}
associated with a state $\ket{u} \in \H$ by $\rho_{u} \defeq \rho_{\omega_u}$ and $M_u \defeq M_{\omega_u}$ respectively.
For a basis $\left( \ket{u_i} \right)$ of $\H$, we write $\rho_i \defeq \rho_{u_i}$ and $M_i \defeq M_{u_i}$ accordingly.
The one-particle measure and spacetime have the following property:

\begin{proposition}\label{prop:superposition}
    Let $\left( \ket{u_i} \right)$ be a basis of $\H$, then the following statements hold
    \begin{enumerate}
        \item $\displaystyle \sum_{i = 1}^{N} \rho_i(\mathcal{U}) = \rho(\mathcal{U})$ for every measurable subset $\mathcal{U} \subseteq \F^{\text{reg}}$,
        \item $\displaystyle \bigcup_{i = 1}^N M_i = M$.
    \end{enumerate}
\end{proposition}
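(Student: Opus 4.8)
The plan is to reduce both statements to the two propositions immediately preceding this one, so that essentially no new computation is required. For part (1), I would first observe that the map $\omega_A \mapsto \expval{\mathcal{O}(\mathcal{U})}_{\omega_A} = \frac{1}{2n}\tr_\H[\omega_A\,\mathcal{O}(\mathcal{U})]$ is linear in its subsystem argument, being a trace, and that $\Omega = \sum_{i=1}^N \omega_i$ by the decomposition $\Omega = \sum_i \ket{u_i}\bra{u_i}$. Hence $\sum_{i=1}^N \rho_i(\mathcal{U}) = \sum_i \expval{\mathcal{O}(\mathcal{U})}_{\omega_i} = \frac{1}{2n}\tr_\H\!\big[(\sum_i \omega_i)\,\mathcal{O}(\mathcal{U})\big] = \expval{\mathcal{O}(\mathcal{U})}_\Omega$, and the preceding proposition identifies the right-hand side with $\rho(\mathcal{U})$ for $\mathcal{U}\subseteq\Freg$. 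This settles (1).

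For (2) I would prove the two inclusions separately. For $\bigcup_i M_i \subseteq M$, the key point is that each one-particle measure is absolutely continuous with respect to $\rho$: the positivity proposition gives $\rho_i(\mathcal{U}) = \frac{1}{2n}\int_\mathcal{U} \norm{\psi^{u_i}(x)}^2 \dd\rho(x)$ for every measurable $\mathcal{U}$, exhibiting $\rho_i$ as $\rho$ integrated against the bounded nonnegative density $\frac{1}{2n}\norm{\psi^{u_i}(\cdot)}^2$. Consequently every $\rho$-null open set is also $\rho_i$-null, so the family of open null sets of $\rho$ is contained in that of $\rho_i$; passing to complements in the definition of the support yields $M_i = \supp\rho_i \subseteq \supp\rho = M$ for each $i$, and therefore $\bigcup_i M_i \subseteq M$.

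For the reverse inclusion $M\subseteq\bigcup_i M_i$ I would argue by contradiction. Suppose $x\in M$ but $x\notin M_i$ for all $i$; then for each $i$ there is an open neighborhood $\mathcal{U}_i \ni x$ with $\rho_i(\mathcal{U}_i)=0$, and the finite intersection $\mathcal{U} = \bigcap_{i=1}^N \mathcal{U}_i$ is an open neighborhood of $x$ with $\rho_i(\mathcal{U})=0$ for every $i$. Summing the density formula over the orthonormal basis gives the pointwise identity $\sum_i \norm{\psi^{u_i}(x)}^2 = \sum_i \braket{u_i}{\pi_x u_i} = \tr\pi_x = \dim S_x$, so that $\sum_i \rho_i(\mathcal{U}) = \frac{1}{2n}\int_\mathcal{U} \dim S_x\, \dd\rho(x)$. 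Since $\dim S_x \geq 1$ for every nonzero operator, the vanishing of all $\rho_i(\mathcal{U})$ forces $\rho(\mathcal{U}\setminus\{0\})=0$; shrinking $\mathcal{U}$ to an open ball about $x$ excluding the zero operator (possible since $x\neq 0$) then gives $\rho(\mathcal{U})=0$, contradicting $x\in\supp\rho$.

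I expect the main obstacle to be the bookkeeping between $\Freg$ and $\F_n$. The clean identity $\sum_i \rho_i = \rho$ from part (1) holds only on regular points, where $\dim S_x = 2n$, whereas the support statement (2) concerns all of $M\subseteq\F_n$. The reverse inclusion therefore cannot quote part (1) verbatim and must instead lean on the pointwise bound $\dim S_x \geq 1$ together with the harmless fact that the zero operator carries no physical wave function, so that $0\notin M$ (or at worst $\rho(\{0\})=0$). Making this step precise — and checking that one may shrink a neighborhood within $\F_n$ while preserving $\rho_i(\mathcal{U})=0$ — is the only genuinely delicate point; everything else reduces to linearity of the trace, absolute continuity, and the elementary characterization of the support.
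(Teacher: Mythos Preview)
Your proposal is correct and follows essentially the same route as the paper. For part (1) the paper does the identical computation (summing $\braket{u_i}{\pi_x u_i}$ to $\tr\pi_x$), and for part (2) the paper compresses your two-inclusion argument into the single line ``since $\rho_i\ge 0$, we have $\rho(\mathcal{U})=0 \Leftrightarrow \rho_i(\mathcal{U})=0$ for all $i$, hence $\supp\rho=\bigcup_i\supp\rho_i$,'' leaving the finite-intersection step implicit. The $\Freg$ versus $\F_n$ bookkeeping that you flagged as the delicate point is simply glossed over in the paper; your handling of it via $\dim S_x\ge 1$ is the right patch, so your version is actually more careful than the published argument.
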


\begin{proof}
    First, we have that
    \begin{align*}
        \rho_i(\mathcal{U}) = \expval{\mathcal{O}(\mathcal{U})}_{u_i} = \frac{1}{2n} \int_{\mathcal{U}} \tr_\H \left[ \ket{u_i} \bra{u_i} \pi_x \right] \dd \rho(x) = \frac{1}{2n} \int_{\mathcal{U}} \bra{u_i} \pi_x \ket{u_i} \dd \rho(x).
    \end{align*}
    Summing over all basis vectors then gives
    \begin{align*}
        \sum_{i = 1}^{N} \rho_i(\mathcal{U}) = \frac{1}{2n} \int_{\mathcal{U}} \tr \pi_x \dd \rho(x) = \rho(\mathcal{U}).
    \end{align*}

    Secondly, as every one-particle measure satisfies $\rho_i \geq 0$, we have $\rho(\mathcal{U}) = 0 \Leftrightarrow \rho_i(\mathcal{U}) = 0$ for all $i$.
    Thus, $\supp \rho = \bigcup_{i = 1}^N \supp \rho_i$.
\end{proof}

Therefore, we can consider the $N$ particle spacetime $M$ as a superposition of the $N$ one-particle spacetimes $M_i$
corresponding to a basis $\left( \ket{u_i} \right)$ of $\H$. We would like to point out that this decomposition corresponds to a subclass of what is referred to as   ``fragmentation'' of the measure in~\cite{finster2019positive,dappiaggi2024holographic} in the context of the derivation of the quantum field theory limit of \ac{cfs}. It would be interesting to understand whether our new perspective here allows a better understanding of the holographic mixing associated with the fragmentation of the measure.

It is worth noting here that the different one-particle spacetimes $M_u$ might have very different topologies compared to $M$,
as they are different subsets of $\F$ from which they inherit the subset topology.
The question of which spacetime topologies should be considered for the variational principle in quantum theories of gravity has been
subject to debate~\cite{carlip1993sum,loll2006nonperturbative,glaser2019quantum,buoninfante2024visions}.
Proposition~\ref{prop:superposition} shows that for \ac{cfs} spacetime topologies to be ``summed over'' are determined by
the minimizing measure and, thereby, an \textit{output} of the action principle\footnote{
This holds true in a weaker sense for theories that treat the measure as an independent variable as this admits variation over all subset topologies of the manifold under consideration.
}.

Furthermore, in light of various arguments in the literature, e.g.~\cite{jiatimeorder, hartle1995spacetime,lightconefluctuations,ford1999spacetime,hardy2005probability,hardy2007towards,zych2019bell,zychquantumtemporal,zychblackhole,foo2021schrodinger,foo2023quantum,foo2023relativity}
regarding the superposition of causal structures in quantum gravity, it is worth noting the following corollary.

\begin{corollary}\label{cor:fixcausalstructure}
    Let $M_u$ and $M_v$ be two one-particle spacetimes with $M_u \cap M_v \neq \emptyset$.
    Then the nature of the causal relation between two spacetime points $x, y \subset M_u \cap M_v$ is independent of whether we consider them as points in $M_u$ or as points in $M_v$.
\end{corollary}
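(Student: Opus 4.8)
The plan is to exploit the fact that the causal structure of Definition~\ref{def:causalstructure} is an \emph{intrinsic} property of a pair of operators in $\F_n$: the classification of $x,y$ as spacelike, timelike, or lightlike separated depends only on the reality and the absolute values of the eigenvalues $\lambda^{xy}_1, \ldots, \lambda^{xy}_{2n}$ of the operator product $xy$. Crucially, this classification makes no reference whatsoever to any measure on $\F_n$; it is fixed the moment $x$ and $y$ are specified as linear operators on $\H$. Establishing this as the first step isolates the only data on which the causal relation can depend.

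The key step is then to recall from Definition~\ref{def:subsystem} that the one-particle spacetimes $M_u$ and $M_v$ are both defined as supports of measures on $\F$, and hence both are subsets of the \emph{same} ambient set $\F_n \subset \Lin(\H)$. Consequently, a point $x \in M_u \cap M_v$ is literally one and the same operator whether we regard it as an element of $M_u$ or of $M_v$; the labels ``$M_u$'' and ``$M_v$'' only record which one-particle measure assigns the point positive weight, not any additional structure the point itself carries. Therefore the operator product $xy$, and with it the entire tuple $(\lambda^{xy}_1, \ldots, \lambda^{xy}_{2n})$, is identical in the two cases, so the causal relation computed ``in $M_u$'' coincides with the one computed ``in $M_v$'', which is precisely the assertion.

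I would close with a brief remark on why the statement is worth recording: although Proposition~\ref{prop:superposition} invites one to picture $M$ as a superposition of the one-particle spacetimes $M_i$, and although much of the quantum-gravity literature cited above is concerned with genuinely indefinite or superposed causal orders, here the causal relation between any two shared points is rigidly fixed by the embedding into $\F_n$ and cannot acquire a frame- or decomposition-dependence. I do not anticipate a genuine technical obstacle; the only point requiring care is verifying that nothing in the passage from $M$ to a one-particle spacetime $M_u$ alters the operators themselves. This follows because $\rho_u$ is, by Definition~\ref{def:subsystem}, a measure on the same set $\F$ on which $\rho$ lives — its values are the expectation values $\expval{\mathcal{O}(\mathcal{U})}_{u}$ — so passing to a subsystem merely reweights points but never relocates or deforms them within $\Lin(\H)$.
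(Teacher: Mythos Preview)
Your argument is correct and matches the paper's own proof: both observe that the causal relation of Definition~\ref{def:causalstructure} is defined intrinsically on the ambient operator set $\F_n$ (via the eigenvalues of $xy$) without reference to any measure, so the one-particle spacetimes $M_u, M_v \subset \F$ simply inherit this fixed relation. The paper states this in two sentences; your version is more expansive but the content is identical.
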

\begin{proof}
    This simply follows from the fact that the causal relation between $x$ and $y$ is defined on the underlying operator manifold $\Freg$.
    The (one-particle) spacetimes then simply inherit this underlying causal relation.
\end{proof}

The last concept we want to introduce in this section is the notion of localized and delocalized states in \ac{cfs}.
This distinction plays an important role in the probe-background split that we introduce in the conceptual description
of experiments in Section~\ref{sec:experiment}.

\begin{definition}[Localization of States]\label{def:localization-of-states}
    Let $\ket{u} \in H$ and $\mathcal{U} \subseteq M$, then $\ket{u}$ is \textbf{localized in} $\mathcal{U}$, if
    for every $x \in M_u$ there exists a $y \in \mathcal{U}$ such that $x$ and $y$ are causally separated.
\end{definition}

This definition extends the concept of Cauchy surfaces.
Suppose that we have a globally hyperbolic spacetime with global time function $t$, then every wavefunction
is supported for all $t$.
Thus, it intersects every Cauchy surface.
Definition~\ref{def:localization-of-states} also allows for (compact) spacetime regions instead of Cauchy surfaces (see Figure~\ref{fig:localized-state}).

\begin{figure}
    \label{fig:localized-state}
\begin{tikzpicture}
    \draw[->] (0,-0.5) -- (0,4) node[above]{$t$};
    \draw[->] (-0.5,0) -- (12,0) node[right]{$\vec{x}$};

    \filldraw[fill=gray!30, draw=black]
    (5,0) -- (1, 4) -- (10, 4) -- (6, 0) -- (6.5, -0.5) -- (4.5, -0.5) -- cycle;
    \node at (5.5,3) {$M_u = \supp \Psi^u$};

    \filldraw[fill=gray!70, draw=black]
    plot[smooth cycle, tension=1] coordinates {
        (1, 1)
        (2, 2)
        (4, 1)
        (6, 1.5)
        (8, 1)
        (3, 0.5)
    };
    \node at (2, 1) {$\mathcal{U}$};
\end{tikzpicture}
    
    \caption{
        This visualization shows associated one particle spacetime $M_u$ of a state $\ket{u} \in H$.
        The spacetime region $\mathcal{U}$ intersects $M_u$ completely.
        Thus, by Definition~\ref{def:localization-of-states}, $\ket{u}$ is localized in $\mathcal{U}$.
        In addition, $\ket{u}$ is not delocalized in the sense of Definition~\ref{def:delocalized-states}. To help with intuition we chose the full spacetime to be the continuum limit of Minkowski space. In general $\mathcal{U}$ and $M_u$ are subsets of an abstract \ac{cfs} spacetime $M$.
    }
\end{figure}
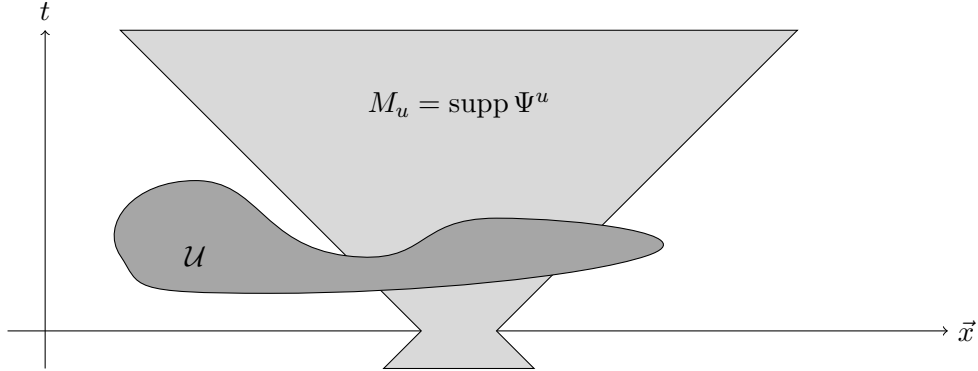

Colloquially, we say that we can observe a particle $\psi^{v}(x)$ in a region $\mathcal{U}$ if $\ket{u}$ is localized in this region.
We define delocalized states in contrast as follows.
\begin{definition}[Delocalized states]\label{def:delocalized-states}
    Let $\ket{u} \in \H$, then $\ket{u}$ is \textbf{delocalized} if $M_u = M$.
\end{definition}
This implies that for a delocalized state the observables for any spacetime region is non-zero.
For such states, there exists no meaningful way to speak about ``where'' they are.
As we argue in Section~\ref{sec:experiment}, these states cannot be used as probes to conduct experiments.

\subsection{Causal Correlations}\label{sec:causalcorrelations}

In this section, we consider a basis $\left( \ket{u_i} \right)$ of $\H$.
Then, we can study the amount of information a physical wave function $\psi^{u_i}(x)$ contains about the correlation between
two points $x$ and $y$.
The operator
\begin{align}\label{eq:informationtransferoperator}
    \spinket{\psi^{u_i}(x)} \spinbra{\psi^{u_i}(y)}
\end{align}
defines how much information about an observable at $y$ can be extracted via the physical wave function $\psi^{u_i}(x)$ of the state $\ket{u_i}$.
More precisely, we want to answer the following question: suppose we have an observable $O_y: S_y \to S_y$ at a point $y$,
what information can an observer at a point
$x$ learn about $O_y$ if $\psi^{u_i}(x)$ is the only information channel available to them?
Using \eqref{eq:informationtransferoperator} as information transfer operator related to the state $\ket{u_i}$, the corresponding
information retained at $x$ is then given by
\begin{align}
    \tilde O_{xy}(\psi^{u_i}) = \spinket{\psi^{u_i}(x)} \spinbra{\psi^{u_i}(y)}O_y \spinket{\psi^{u_i}(y)} \spinbra{\psi^{u_i}(x)} : S_x \rightarrow S_x.
\end{align}
To better motivate this notion, we revert back to non-relativistic quantum mechanics for a moment.
If we carry out the same considerations for an operator $O_t$ at time $t$ and a quantum state $\psi(t)$, we get
\begin{align}
    \label{eq:heisenberg-with-states}
    \tilde O_{t't}=   \ket{\psi(t')} \bra{\psi(t)}O_t \ket{\psi(t)} \bra{\psi(t')}.
\end{align}
If the evolution of the state is unitary, then we get that
\begin{align}
    \ket{\psi(t')} \bra{\psi(t)}= U_{(t-t')} \text{ and } \ket{\psi(t)} \bra{\psi(t')}= U_{(t'-t)}
\end{align}
and with
\begin{align}
    U_{-t} = U^{-1}_t.
\end{align}
equation~\eqref{eq:heisenberg-with-states} becomes simply Heisenberg evolution for observables
\begin{align}
    \label{eq:heisenberg-with-operators}
    \tilde O_{t'}= U_{(t'-t)} ^{-1}  O_t U_{(t'-t)} .
\end{align}
and unitarity guarantees preservation of the entire information between the observable $O_t$ and $O_{t'}$ for any state $\psi(t)$ serving as a communication channel between time $t$ and time $t'$.
However, for two arbitrary points in spacetime, there is apriori no reason why one should expect preservation of information,
i.e.~for $\spinket{\psi^{u_i}(y)} \spinbra{\psi^{u_i}(x)}$ to be unitary.
We now remember Definition~\ref{def:observables} where we declared that, modulo the measure, from within \ac{cfs} the
only admissible observable at $y$ is
$O_y = \pi_y |_{S_y} = \id_{S_y}$.
Then, the information about this observable recovered at $x$ via the one particle wave function $\psi^{u_i}(x)$ is given by
\begin{align}
    \tilde O_{xy}(\psi^{u_i})=   \spinket{\psi^{u_i}(x)} \spinbraket{\psi^{u_i}(y)}{\psi^{u_i}(y)} \spinbra{\psi^{u_i}(x)} : S_x \rightarrow S_x.
\end{align}
To recover the total information shared between two spacetime points $x$ and $y$ in an $N$-particle \ac{cfs}, we again sum
over all basis vectors to get
\begin{align}
    A_{xy} \defeq \sum_{i, j = 1}^N \spinket{\psi^{u_j}(x)} \spinbraket{\psi^{u_j}(y)}{\psi^{u_i}(y)} \spinbra{\psi^{u_i}(x)}.
\end{align}
which turns out to be a familiar concept in \ac{cfs}, namely the so-called closed chain. Furthermore we can ``split'' the closed chain by introducing the kernel of the fermionic projector as the total information transfer operator between $x$ and $y$
\begin{align}
    \label{eq:kernel-of-the-fermionic-projector}
    P(x,y) \defeq \sum_{i = 1}^N \spinket{\psi^{u_i}(x)} \spinbra{\psi^{u_i}(y)}.
\end{align}
Reformulating the kernel of the fermionic projector and the closed chain in a basis independent way gives
\begin{align}
    P(x, y) &= \pi_x y|_{S_y} : S_y \to S_x, \\
    A_{xy} &= P(x, y) P(y, x) = \pi_x y x|_{S_x}  : S_x \to S_x.
\end{align}
Based on this representation, we find the following properties.

\begin{proposition} \label{prop:properties-of-closed-chain}
    Let $x, y \in M$ be two spacetime points, then
    \begin{enumerate}
        \item $P(x, y)$ is symmetric in the sense that $\left( P(x, y) \right)^* = P(y, x)$
        \item $A_{xy}$ is a symmetric operator on $S_x$.
    \end{enumerate}
    with respect to the spin inner product $\spinbraket{\cdot}{\cdot}_x$ and $\spinbraket{\cdot}{\cdot}_y$ respectively.
\end{proposition}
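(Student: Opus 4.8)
The plan is to work entirely with the basis-independent representations $P(x,y) = \pi_x y|_{S_y}$ and $A_{xy} = \pi_x y x|_{S_x}$ and to translate both symmetry claims into statements about the Hilbert-space adjoint, which is easier to control. First I would recall the structural facts I intend to use repeatedly: the spin inner product on $S_x$ has the form $\spinbraket{a}{b}_x = -\braket{a}{x b}$, and the operator $x$ is symmetric on $\H$ with $\ker x = S_x^\perp$, so that $\pi_x x = x = x \pi_x$ (the first because $\Img x = S_x$, the second because $x$ annihilates $S_x^\perp = \ker \pi_x$); the analogous identities hold for $y$. I would also fix the meaning of the adjoint of a map $B : S_y \to S_x$ with respect to the (nondegenerate, though indefinite) spin inner products as the unique $B^* : S_x \to S_y$ with $\spinbraket{u}{B v}_x = \spinbraket{B^* u}{v}_y$ for all $u \in S_x$ and $v \in S_y$.

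For part (1) the core is a direct computation. Taking $u \in S_x$ and $v \in S_y$, I would expand $\spinbraket{u}{P(x,y) v}_x = -\braket{u}{x \pi_x y v}$, collapse $x \pi_x = x$, and then move $x$ across the Hilbert inner product using its symmetry to reach $-\braket{x u}{y v}$. Running the same expansion on the other side, $\spinbraket{P(y,x) u}{v}_y = -\braket{\pi_y x u}{y v}$, and using symmetry of $\pi_y$ together with $\pi_y y = y$, I would again land on $-\braket{x u}{y v}$. The two expressions coincide, which is exactly the identity $\spinbraket{u}{P(x,y)v}_x = \spinbraket{P(y,x)u}{v}_y$ defining $(P(x,y))^* = P(y,x)$. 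Notably the overall sign and any normalization of the spin inner product cancel on both sides, so the argument is insensitive to sign conventions.

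Part (2) I would then deduce formally rather than by recomputation: since the spin inner products are nondegenerate, the adjoint is contravariant under composition, $(BC)^* = C^* B^*$, so $A_{xy}^* = (P(x,y) P(y,x))^* = (P(y,x))^* (P(x,y))^* = P(x,y) P(y,x) = A_{xy}$, where the last step is just part (1) applied to both factors. As a cross-check one can also verify $A_{xy}^* = A_{xy}$ directly from $A_{xy} = \pi_x y x|_{S_x}$, since $\spinbraket{u}{A_{xy} v}_x = -\braket{u}{x y x v}$ is manifestly symmetric in $u$ and $v$ after repeated use of the self-adjointness of $x$ and $y$.

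The only real subtlety — and the step I would be most careful about — is the interplay with the indefinite spin inner product: I must confirm that the spin form is nondegenerate on $S_x$ (so that adjoints exist and the composition rule is valid) and that the restriction and projection bookkeeping ($x\pi_x = x$, $\pi_y y = y$, and the fact that $u \in S_x$ lets me treat $x u$ without a stray $\pi_x$) is carried out consistently. Everything else is routine manipulation of self-adjoint operators on $\H$.
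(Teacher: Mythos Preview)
Your proposal is correct and follows essentially the same route as the paper: a direct computation for part (1) using $P(x,y)=\pi_x y|_{S_y}$, the identity $x\pi_x=x$, self-adjointness of $x$ and $y$, and the projection identity $\pi_y y=y$, followed by the deduction of (2) from $A_{xy}=P(x,y)P(y,x)$ and contravariance of the adjoint. The only cosmetic difference is the sign convention in the spin inner product (the paper's proof uses $\spinbraket{a}{b}_x=\braket{a}{xb}$ without the minus), which, as you note, cancels on both sides and is immaterial.
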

\begin{proof}
    ``(1)'': Let $\ket{u} \in S_x$ and $\ket{v} \in S_y$, then
    \begin{align*}
        \spinbraket{u}{P(x, y) v}_x &= \braket{u}{x P(x, y) v} = \braket{u}{x \pi_x y v}\\
        &= \braket{x u}{y v} = \braket{\pi_y x u}{y v} = \spinbraket{P(y, x) u}{v}_y.
    \end{align*}
    Thus, $\left( P(x, y) \right)^* = P(y, x)$.

    \noindent
    ``(2)'': The symmetry property of $A_{xy}$ follows directly from property (i) using that $A_{xy} = P(x, y) P(y, x)$.
\end{proof}

Moreover, the fact that the operator products $AB$ and $BA$ have the same non-vanishing eigenvalues (including multiplicities)
implies that the closed chain $A_{xy}$ and the operator product $xy$ have the same eigenvalues.
Thus, we can treat the Lagrangian as a function of the closed chain.
The information used in the causal action principle is then simply the information contained in the causal correlations
between all pairings $x$ and $y$ of the spacetime points.

As stated above, spacelike separated points (according to Definition~\ref{def:causalstructure}) do not
contribute to the Lagrangian in the action principle.
Hence, we can restrict the closed chain to causally separated points setting spacelike-separated points to zero.

\begin{definition}[Causal Correlation Operator]
    \label{def:causal-correlation-operator}
    Let $x, y \in M$, $A_{xy}$ be the closed chain between $x$ and $y$ as defined above, then the \textbf{causal correlation operator}
    is defined as
    \begin{align}
        \tilde{A}_{xy}: \H \to \H = \begin{cases}
                                          0 & \text{if $x$ and $y$ spacelike separated}\\
                                          A_{xy} \pi_x & \text{otherwise.}
        \end{cases}
    \end{align}
\end{definition}

Concerning the causal action, the causal correlation operator encodes the same information as the closed chain.
Thus, we do not lose any relevant physical information by using the causal correlation operator.

\begin{definition}[One-Particle Two-Point Correlation Strength]\label{def:1Pcorrelation}
    Let $\ket{u} \in \H$ and $x, y \in M$, then the \textbf{one-particle two-point correlation strength}
    $b_{u}(x, y)$ is defined as
    \begin{align}
        b_{u}(x, y) \defeq \frac{1}{2n} \tr_{\H} \left[ \ket{u} \bra{u} \tilde{A}_{xy} \right].
    \end{align}
\end{definition}

This definition is in agreement with the Definition~\ref{def:observables} of the expectation value of an operator, except for the fact
that $\tilde{A}_{xy}$ is not an observable in the sense of Definition~\ref{def:observables}
As a consequence, $b_u(x, y)$ is not necessarily a positive real number.
In general, $b_u(x, y)$ is a complex number with the following properties.
\begin{lemma}
    Let $x, y \in M$, then one-particle two-point correlation strength $b_{u}(x, y)$ satisfies
    \begin{enumerate}
        \item $\displaystyle \complexconj{b_u(x, y)} = b_u(y, x)$,
        \item if $A_{xy}$ is diagonalizable, $x$ and $y$ are timelike separated, then ${b_{u}(x, y) \in \R}$.
    \end{enumerate}
\end{lemma}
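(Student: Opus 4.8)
The plan is to reduce both statements to the action of the closed chain $A_{xy}$ on a single physical wave function, and then to feed in the symmetry relations from Proposition~\ref{prop:properties-of-closed-chain}. First I would unpack the definition: since $\tilde A_{xy} = A_{xy}\pi_x = \pi_x y x \pi_x$ and $\ket{u}\bra{u}$ is rank one, the trace in Definition~\ref{def:1Pcorrelation} collapses to an expectation value,
\begin{align*}
    b_u(x,y) = \frac{1}{2n}\braket{u}{\tilde A_{xy}\, u} = \frac{1}{2n}\braket{\psi^u(x)}{A_{xy}\,\psi^u(x)},
\end{align*}
using $\psi^u(x) = \pi_x u \in S_x$ and $\braket{u}{w} = \braket{\pi_x u}{w}$ for $w\in S_x$. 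Everything is thus controlled by how $A_{xy}$ pairs with $\psi^u(x)$ in the Hilbert inner product.

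For part (1) I would take the complex conjugate, using that $\ket{u}\bra{u}$ is self-adjoint together with $\complexconj{\tr[B]} = \tr[B^{*}]$, to obtain $\complexconj{b_u(x,y)} = \frac{1}{2n}\tr_{\H}\!\left[\ket{u}\bra{u}\,\tilde A_{xy}^{\,*}\right]$. The statement then follows once the operator identity $\tilde A_{xy}^{\,*} = \tilde A_{yx}$ is established, which I would attempt by expanding $\tilde A_{xy} = P(x,y)P(y,x)\,\pi_x$ and inserting $\big(P(x,y)\big)^{*} = P(y,x)$ from Proposition~\ref{prop:properties-of-closed-chain}. I expect this to be the main obstacle: the adjoint in Proposition~\ref{prop:properties-of-closed-chain} is taken with respect to the indefinite spin inner products $\spinbraket{\cdot}{\cdot}_x$ and $\spinbraket{\cdot}{\cdot}_y$, whereas conjugating the Hilbert trace produces the Hilbert-space adjoint; the real work is to show that these two notions of adjoint match on the operators at hand and to keep track of the projections $\pi_x,\pi_y$, since $S_x$ and $S_y$ need not coincide.

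For part (2) I would again start from $b_u(x,y) = \frac{1}{2n}\braket{\psi^u(x)}{A_{xy}\,\psi^u(x)}$ and use the fact recorded above that $A_{xy}$ and the product $xy$ have the same eigenvalues. Timelike separation in the sense of Definition~\ref{def:causalstructure} then forces every eigenvalue of $A_{xy}$ to be real. Combining this with the hypothesis that $A_{xy}$ is diagonalizable and with its spin-symmetry from Proposition~\ref{prop:properties-of-closed-chain}, I would argue that $A_{xy}$ admits a spin-orthonormal eigenbasis, write it spectrally as $A_{xy} = \sum_k \lambda_k\,\spinket{e_k}\spinbra{e_k}$ with $\lambda_k\in\R$, and try to read off reality of the expectation value. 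The delicate point is the same spin-versus-Hilbert tension as in part (1): reality of $\braket{\psi}{A_{xy}\,\psi}$ in the positive-definite Hilbert product does not follow from a real spectrum alone, so I would need diagonalizability together with spin-self-adjointness to upgrade the decomposition to one whose eigenprojections behave correctly under $\tr_{\H}$. I expect establishing this upgrade — in effect a spectral theorem for spin-self-adjoint operators with real, semisimple spectrum — to carry the genuine content of part (2).
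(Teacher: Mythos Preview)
Your reduction to the single-vector form $b_u(x,y) = \frac{1}{2n}\braket{\psi^u(x)}{A_{xy}\,\psi^u(x)}$ matches the paper's first step. The divergence comes in how you handle the conjugation for part~(1). You plan to route through the spin-adjoint identity $(P(x,y))^{*} = P(y,x)$ of Proposition~\ref{prop:properties-of-closed-chain} and then worry --- rightly --- that this is a spin adjoint, not a Hilbert adjoint, so that the operator identity $\tilde A_{xy}^{\,*} = \tilde A_{yx}$ you are after is not immediate. The paper bypasses this entirely: it writes $b_u(x,y) = \frac{1}{2n}\braket{\psi^u(x)}{yx\,\psi^u(x)}$ in the \emph{Hilbert} inner product and conjugates using only that $x$ and $y$ are symmetric operators on $\H$ (this is part of the definition of $\F_n$), so that $(yx)^{*} = xy$ holds directly in $\Lin(\H)$. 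Neither $P(x,y)$ nor the spin inner product enters; the obstacle you isolate is an artefact of your chosen route and never arises in the paper's argument. In short, for (1) you are making life harder than necessary by invoking Proposition~\ref{prop:properties-of-closed-chain} --- the symmetry of the spacetime point operators themselves is all that is used.

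For part~(2) your outline and the paper's proof coincide: diagonalizability of $A_{xy}$ is used to pass to an eigenbasis of $\tilde A_{xy}$, and the timelike hypothesis forces all eigenvalues to be real, from which reality of $b_u(x,y)$ is read off. The paper simply asserts that an orthonormal eigenbasis of $\tilde A_{xy}$ exists and concludes; the spin-versus-Hilbert subtlety you flag is not engaged with there.
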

\begin{proof}
    ``(1)'':
    $b_u(x, y)$ can be rewritten as
    \begin{align*}
        b_u(x, y) &= \frac{1}{2n} \tr_{\H} \left[ \ket{u} \bra{u} \tilde{A}_{xy} \right] = \frac{1}{2n}  \braket{u}{\tilde{A}_{xy} u} = \frac{1}{2n} \braket{\psi^u(x)}{yx \psi^u(x)}.
    \end{align*}
    Thus, using that $x$ and $y$ are both symmetric operators, we get
    \begin{align*}
        \complexconj{b_u(x, y)} &= \complexconj{\braket{\psi^u(x)}{yx \psi^u(x)}} = \braket{yx \psi^u(x)}{\psi^u(x)} \\
        &= \braket{\psi^u(x)}{xy \psi^u(x)} = b_u(y, x)
    \end{align*}

    \noindent ``(2)'':
    If $A_{xy}$ is diagonalizable, then there also exists an orthonormal eigenbasis $\left( \ket{u_i} \right)$ of $\tilde{A}_{xy}$
    For timelike $x$ and $y$ all eigenvalues of $A_{xy}$ and hence of $\tilde{A}_{xy}$ are real (according to Definition~\ref{def:causalstructure}).
    This implies that in this case $b_u(x, y) \in \R$
\end{proof}

Similarly to the construction of the fermionic projector from the one-state information transfer operator and the analysis
of the expectation value of the observables as presented in Section~\ref{subsec:observables}, we obtain the total
two-point correlation strength by summing over all basis vectors $\ket{u_i}$.

\begin{definition}[Two-Point-Correlation-Strength]
    Let $\left( \ket{u_i} \right)_i$ be a basis of $\H$, $x, y \in M$, then the \textit{two-point-correlation-strength}
    is defined as
    \begin{align}
        b(x, y) \defeq \sum_{i = 1}^N b_{u_i}(x, y) = \expval{\tilde{A}_{xy}}_\Omega = \frac{1}{2n} \tr_\H {\tilde{A}_{xy}}.
    \end{align}
\end{definition}

In contrast to the one-particle two-point correlation strength, the total two-point correlation strength is a real number for all
$x$ and $y$.

\begin{lemma}
    The total two-point correlation strength $b(x, y)$ is a real number for all $x, y \in M$.
\end{lemma}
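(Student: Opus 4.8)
The plan is to reduce the complex number $b(x,y)$ to the trace of a product of two self-adjoint operators, whose reality is then immediate. First I would dispose of the spacelike case: if $x$ and $y$ are spacelike separated, then $\tilde{A}_{xy} = 0$ by Definition~\ref{def:causal-correlation-operator}, so $b(x,y) = 0 \in \R$ trivially. Hence I may assume $x$ and $y$ are causally separated, where $\tilde{A}_{xy} = A_{xy}\pi_x = \pi_x y x \pi_x$ as an operator on $\H$.

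Next I would compute the trace appearing in the basis-independent expression $b(x,y) = \frac{1}{2n}\tr_\H \tilde{A}_{xy}$. Using $\pi_x^2 = \pi_x$ together with the cyclicity of the trace, and the identity $x\pi_x = x$ --- which holds because $x$ is symmetric, so that $\ker x = S_x^\perp$ and $\pi_x$ acts as the identity on the range of $x$ --- the trace collapses to $\tr_\H \tilde{A}_{xy} = \tr_\H(xy)$. The reality then follows from self-adjointness of $x$ and $y$: one has $\complexconj{\tr_\H(xy)} = \tr_\H\big((xy)^*\big) = \tr_\H(yx) = \tr_\H(xy)$, the last step again by cyclicity, so $\tr_\H(xy) \in \R$ and therefore $b(x,y) \in \R$.

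An alternative route uses the preceding lemma directly: summing the identity $\complexconj{b_u(x,y)} = b_u(y,x)$ over an orthonormal basis $\left( \ket{u_i} \right)$ of $\H$ gives $\complexconj{b(x,y)} = b(y,x)$, after which it remains to check the symmetry $b(x,y) = b(y,x)$, again a consequence of trace cyclicity. I prefer the direct computation, as it exhibits the underlying reason --- namely that the total correlation strength is, up to the factor $\frac{1}{2n}$, simply $\tr_\H(xy)$.

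The step requiring the most care is the trace simplification, and more specifically the fact that reality here cannot be read off from self-adjointness of $\tilde{A}_{xy}$ itself: in general $(\pi_x y x \pi_x)^* = \pi_x x y \pi_x \neq \pi_x y x \pi_x$, so the causal correlation operator is not self-adjoint and its individual one-particle contributions $b_{u}(x,y)$ may genuinely be complex (as the preceding lemma shows, reality of $b_u$ required both diagonalizability and timelike separation). The point of the proof is that these imaginary parts cancel upon summation over the full basis, precisely because the sum reconstitutes $\tr_\H(xy)$; isolating this cancellation via the identity $x\pi_x = x$ is the only nonroutine ingredient.
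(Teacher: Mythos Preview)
Your proof is correct. The reduction $\tr_\H \tilde{A}_{xy} = \tr_\H(xy)$ via $x\pi_x = x$ and cyclicity is clean, and the reality of $\tr_\H(xy)$ then follows immediately from $(xy)^* = yx$ together with cyclicity again.

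The paper's proof reaches the same conclusion by a slightly different (spectral) route: it invokes the isospectrality of $xy$ and $yx$ to argue that the nonzero eigenvalues of $A_{xy}$ are closed under complex conjugation, whence their sum is real. Your argument is more direct in that it bypasses eigenvalues entirely and works at the level of traces; this has the virtue of making transparent the identity $b(x,y) = \tfrac{1}{2n}\tr_\H(xy)$, which the paper's proof uses implicitly (via ``the trace is the sum of the eigenvalues of $A_{xy}$, which coincide with those of $xy$'') but does not state. The paper's spectral phrasing, on the other hand, ties the result more visibly to the causal-structure machinery of Definition~\ref{def:causalstructure}, where the $\lambda^{xy}_j$ are the central objects. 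Both arguments ultimately rest on the same algebraic fact --- that $xy$ and its adjoint $yx$ have equal trace --- so the difference is one of presentation rather than substance.
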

\begin{proof}
    Since the operators $xy$ and $yx$ are isospectral, i.e.~they have the same non-vanishing eigenvalues, implies that
    if $\lambda^{xy}$ is an eigenvalue of $A_{xy}$ also $\complexconj{\lambda^{xy}}$ is an eigenvalue of $A_{yx}$.
    Thus, $\frac{1}{2n} \tr_{\H} {\tilde{A}_{xy}}$ as the sum over the eigenvalues of $A_{xy}$ is a real number.
\end{proof}

This completes our discussion of causal correlations in the abstract setting.
In the next section, we analyze the defined objects in the continuum limit.

\section{Continuum Limit}\label{sec:continuum-limit}

In this section, we discuss the continuum limit of \ac{cfs}.
We focus only on the relevant aspects for this paper.
For a complete derivation of the forces of the standard model and the derivation of general relativity, see~\cite{cfs}.

Note that in our reading of a \ac{cfs} as a state of a $N$ particle many-body quantum system, the continuum limit corresponds
to the large $N$ limit; hence $N \to \infty$.
The idea behind the construction of the local correlation map was introduced pedagogically in~\cite{mmt-cfs, Finster2024}.
The key insight here is that thanks to the local correlation map and the realization of the Hilbert space in terms of
physical wave functions, we can work with conventional function spaces in Lorentzian manifolds to construct a \ac{cfs}.
For simplicity, we restrict ourselves to the case of a Minkowski spacetime as the classical spacetime.

Thus, identifying Minkowski space with~$\mathcal{M} = \R^{1,3}$, endowed with the standard Minkowski inner product with signature
convention $(+,-,-,-)$, and a fixed reference frame.
We consider smooth solutions of the Dirac equation in Minkowski space,
\begin{align*}
    \left(i \gamma^k \partial_k - m\right) u = 0,
\end{align*}
with spatially compact support $u \in C_{\text{sc}}^\infty(\R^{1,3})$ (i.e.~with compact support in the spatial variables).
The space of solutions is endowed with the usual scalar product
\begin{align*}
    \braket{u}{v} \defeq \int_{t=\text{const}} \diracad{u(t, \vec{x})} \gamma^0 v(t, \vec{x}) \dd\vec{x},
\end{align*}
where $\diracad{u} = u^\dagger\gamma^0$ is the adjoint spinor.

The completion of the solution space with respect to the scalar product defines the Hilbert space.
However, occupying all states in this Hilbert space does not correspond to a minimizing \ac{cfs}.
Thus, the Hilbert space of all solutions gives the auxiliary Hilbert space $\tilde{\H}$.
For the vacuum, only the negative energy solutions of the Dirac equation are occupied.
This defines the Hilbert space $\H$.

To construct the continuum limit for Minkowski space~\cite[Remark 1.2.1 and Section 2.4]{cfs} we want to work with a sequence of \ac{cfs} which minimize
the causal action principle and limit to the Dirac sea in a suitable sense.
The causal action principle is known to be well-defined for finite-dimensional Hilbert spaces~\cite[Chapter 12]{Bernard2014, Finster2024}.
Starting with the Dirac sea we face two problems: first the spectrum, i.e.~the mass-shell is a continuum and second, it is non-compact.
To discretize the spectrum, we need to introduce an IR-cutoff in the form of a large but finite spatial box of side length $L$.
To compactify the spectrum, we need to introduce a regularization in terms of a UV-cutoff on the order of the Planck energy $\varepsilon^{-1}$.
This leaves us with an $N$-dimensional subspace $\H_N$ of negative energy solutions where the dimension $N(L,\varepsilon)= \dim \H_N$
is a function of the box size $L$ and the energy cutoff $\varepsilon^{-1}$.
It is clear that $N(L,\epsilon)\rightarrow \infty$ for both $\varepsilon$ fix and $L\rightarrow \infty$ as it is de-discretizes the spectrum, as well as $L$ fix and $\varepsilon \rightarrow 0$
as it de-compactifies the spectrum.
The continuum limit of Minkowski space is obtained in the limit where we take both $L\rightarrow \infty$ and $\varepsilon \rightarrow 0$ simultaneously giving us the full subspace of all negative-energy
solutions, i.e.~the Dirac sea\footnote{
    The box size $L$ can be thought of as the cosmological horizon scale $r_{\mathcal{CH}}= \left( \frac{3}{\Lambda}\right)^{1/2}$
    and hence $\Lambda$ as an infrared cut-off.
    If we assume that the cosmological constant depends on the regularisation scale $\varepsilon$ such that $\Lambda(\varepsilon) \searrow 0$
    for $\varepsilon \searrow 0$ it should be possible to construct a continuum limit for Minkowski space from the interacting
    region of de-Sitter space.
    In this case $N(\Lambda(\varepsilon),\varepsilon)$  would depend on the regularization length $\varepsilon$ alone and
    we would have a simultaneous UV-IR-limit.
    Note that according to \cite{visser2002sakharov} such a dependence of the cosmological constant on the Planck scale was already proposed
    by Sakharov in \cite{sakharov2000vacuum}.
}.

Note that to encode the full Minkowski spacetime we need an infinite-dimensional Hilbert space, and thereby we do not run
into Shannon's sampling theorem here.
However, if we were to extend the Hilbert space to include states of the auxiliary Hilbert space $\tilde{\H}$, we would
not add additional information.
A \ac{cfs} that includes positive energy states as physical wave functions does not encode a vacuum spacetime anymore.
However, the additional positive frequency states will appear as fermionic matter in the effective continuum description.
This is the fundamental insight that underlies the baryogenesis mechanism derived from \ac{cfs} in~\cite{baryogenesis,finster2025baryogenesis}. 

Now we construct the local correlation map for this setting explicitly, starting from the finite-dimensional Hilbert
space $\H_N$ with the additional assumption that the $N$ wave functions we choose are all continuous.
In this case, the point-wise inner product of the wave
functions defines a bounded sesquilinear form~$\spinbraket{\cdot}{\cdot}_x$ on $\H$,
\begin{align*}
    \spinbraket{u}{v}_x \defeq - \diracad{u(x)} v(x).
\end{align*}
Riesz representation theorem allows us to represent the sesquilinear form by a uniquely determined bounded operator on
the Hilbert space, which we denote by~$F(x)$, i.e.
\begin{align*}
    \braket{u}{F(x) v}_\H \defeq \spinbraket{u}{v}_x && \text{ for all } \ket{u}, \ket{v} \in \H.
\end{align*}
By construction, this operator is self-adjoint, has rank at most four, and has
at most two positive and two negative eigenvalues.
This means that $F(x)$ is a map from Minkowski space into the set $\F_2$, hence the name ``local correlation map''.

\subsection{Causal Correlations in Minkowski Spacetime}\label{subsec:causal-correlations-in-the-continuum}

Using this representation for the Hilbert space $\H$ and the local correlation map $F(x)$, we compute the kernel of the
fermionic projector according to equation~\eqref{eq:kernel-of-the-fermionic-projector}.
This gives (for details, see~\cite[Chapter~1]{cfs} or~\cite[Section~5.6]{Finster2024})
\begin{align}
    \label{eq:p-sea}
    P(x, y) &= \int \frac{\dd^4 k}{(2 \pi)^4} \left(\slashed{k} + m \right) \delta(k^2 - m^2) \Theta(-k^0) e^{-ik(x-y)} \\
    &= i \alpha(x, y) (\slashed{y} - \slashed{x})  + \beta(x, y)
\end{align}
where
\begin{align} \label{eq:alpha-beta}
    \alpha(x, y) &\defeq \frac{m^4}{8 \pi^3} \frac{K_2(z)}{z^2}, & \beta(x, y) &\defeq \frac{m^3}{8 \pi^3} \frac{K_1(z)}{z}.
\end{align}
Here $K_1$ and $K_2$ are the modified Bessel functions of the second kind, and their argument is defined as $z \defeq m \sqrt {-(y - x)^2}$.
Note that $P(x, y)$ in the case of the Minkowski vacuum is, up to a factor of $2 \pi i$, equal to the standard
two-point correlation function of a free fermionic field~\cite[Chapter 3]{Peskin2018}.
Thus, it is natural to think of $A_{xy} = P(x, y) P(y, x)$ as a sort of two-point correlation strength operator.
Using the concrete form of $P(x, y)$ we can compute $A_{xy}$ as
\begin{align}
    A_{xy} = a(x, y) (y_\mu - x_\mu) \gamma^\mu + b(x, y)
\end{align}
with
\begin{align}
    a(x, y) &\defeq 2 \Im \left( \overline{\alpha(x, y)} \beta(x, y) \right), \\
    b(x, y) &= \abs{\alpha(x, y)}^2 (y - x)^2 + \abs{\beta(x, y)}^2.
\end{align}

From here, we directly obtain the eigenvalues of the closed chain by
\begin{align}
    \lambda^{xy}_{\pm} = b(x, y) \pm a(x, y) \sqrt {(y - x)^2}.
\end{align}
In the continuum limit, we recover that the causal structure of \ac{cfs} given by Definition~\ref{def:causalstructure}
coincides with the usual causal structure of the Minkowski spacetime.
\subsection{Causal Action as a Variance of Causal Correlations}\label{subsec:causal-action-as-a-variance-of-correlations}

In general, the kernel of the fermionic projector can be any symmetric operator from $S_y$ to $S_x$ (symmetric in the sense of
proposition~\ref{prop:properties-of-closed-chain}).
For the Minkowski vacuum, $P(x, y)$ given by equation~\eqref{eq:p-sea} has a particular vector, scalar structure,
i.e.~in the vectorspace of symmetric operators from $S_x$ to $S_y$ with basis $(\mathbb{1}, i \gamma^5, \gamma^\mu, \gamma^5 \gamma^\mu, \Sigma^{\mu\nu})$\footnote{
    Here, we have $\gamma^5 = i \gamma^0 \gamma^1 \gamma^2 \gamma^3$ as the usual fith gamma matrix and we refer to $i \gamma^5$
    as the pseudo scalar component.
    In addition, we have the bi-linear Lorentz boost tensor $\Sigma^{\mu\nu} = \frac{i}{2} \left[ \gamma^\mu, \gamma^\nu \right]$.
}
only the coefficients of $\mathbb{1}$ and $\gamma^\mu$ are non-zero.
In this case, the closed chain also has this vector, scalar structure.

If in addition, we consider the more general case where
\begin{align} \label{eq:general-vector-scalar-structure}
    P(x, y) = i \alpha(x, y) \slashed{\xi} + \beta(x, y)
\end{align}
with general complex functions $\alpha$, $\beta$ and a complex vector $\xi_\mu(x, y)$, then the closed chain also has
bilinear components, i.e.
\begin{align} \label{eq:closed-chain-vector-scalar-decomposition}
    A_{xy} = b(x, y) + a_\mu(x, y) \gamma^\mu + c_{\mu\nu}(x, y) \Sigma^{\mu\nu}.
\end{align}
For example, this is the case for a regularized Minkowski vacuum.
Under these assumptions, it is possible to rewrite the Lagrangian in terms of a variance related to the expectation value
the causal correlation operator and the two-point-correlation strength.

\begin{proposition}
    Let $P(x, y)$ be the kernel of the fermionic projector of the form~\eqref{eq:general-vector-scalar-structure}, then
    the Lagrangian is given by
    \begin{align}
        \L(x, y) = 4 \Var_\Omega \left[ \tilde{A}_{xy} \right] \defeq 4 \expval{\left( \tilde{A}_{xy} - \expval{\tilde{A}_{xy}} \right)^2}_{\Omega}
    \end{align}
\end{proposition}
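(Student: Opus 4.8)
The plan is to express both sides as symmetric functions of the eigenvalues $\lambda^{xy}_1,\dots,\lambda^{xy}_{2n}$ of the closed chain $A_{xy}$ and to compare them, using that the $\lambda^{xy}_j$ appearing in the Lagrangian (as eigenvalues of $xy$) are the same as those governing the variance, since $A_{xy}$ and $xy$ are isospectral and $\tilde A_{xy}$ agrees with $A_{xy}$ on $S_x$. For the variance, the definition of the expectation value together with $\expval{\tilde A_{xy}}_\Omega=\frac{1}{2n}\tr_\H\tilde A_{xy}$ and $\expval{\tilde A_{xy}^2}_\Omega=\frac{1}{2n}\tr_\H\tilde A_{xy}^2$ give
\begin{align*}
\expval{\tilde A_{xy}}_\Omega=\frac{1}{2n}\sum_{j=1}^{2n}\lambda^{xy}_j,\qquad
\expval{\tilde A_{xy}^2}_\Omega=\frac{1}{2n}\sum_{j=1}^{2n}\big(\lambda^{xy}_j\big)^2,
\end{align*}
so that, expanding $\Var_\Omega[\tilde A_{xy}]=\expval{\tilde A_{xy}^2}_\Omega-\expval{\tilde A_{xy}}_\Omega^2$ and specializing to $2n=4$,
\begin{align*}
4\,\Var_\Omega\!\left[\tilde A_{xy}\right]=\sum_{j}\big(\lambda^{xy}_j\big)^2-\frac14\Big(\sum_{j}\lambda^{xy}_j\Big)^2 .
\end{align*}

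In parallel, using the elementary identity $\sum_{i,j=1}^{2n}(a_i-a_j)^2=4n\sum_i a_i^2-2(\sum_i a_i)^2$ with $a_j=|\lambda^{xy}_j|$, the Lagrangian~\eqref{eq:lagrangian} telescopes to
\begin{align*}
\L(x,y)=\sum_{j}\big|\lambda^{xy}_j\big|^2-\frac{1}{2n}\Big(\sum_{j}\big|\lambda^{xy}_j\big|\Big)^2 ,
\end{align*}
which for $2n=4$ reads $\sum_j|\lambda^{xy}_j|^2-\tfrac14(\sum_j|\lambda^{xy}_j|)^2$. Comparing the two displays, the claimed identity holds precisely when $|\lambda^{xy}_j|=\lambda^{xy}_j$ for every $j$, i.e.\ when the eigenvalues of the closed chain are real and non-negative; more generally the same bookkeeping yields $\L=2n\,\Var_\Omega[\tilde A_{xy}]$, and the prefactor $4$ in the statement is exactly $2n$ in spin dimension $n=2$.

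It therefore remains to control the spectrum of $A_{xy}$ for the kernel~\eqref{eq:general-vector-scalar-structure}. Writing $P(y,x)=P(x,y)^*=-i\complexconj{\alpha}\,\slashed{\complexconj{\xi}}+\complexconj{\beta}$ and multiplying out $A_{xy}=P(x,y)P(y,x)$ reproduces the decomposition~\eqref{eq:closed-chain-vector-scalar-decomposition} with real coefficients, the bilinear part being the simple bivector $c_{\mu\nu}\propto \xi_\mu\complexconj{\xi}_\nu-\xi_\nu\complexconj{\xi}_\mu$ and $a_\mu$ a real combination of $\xi_\mu$ and $\complexconj{\xi}_\mu$. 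I would then show that $V\defeq A_{xy}-b\,\mathbb{1}$ squares to a scalar: the mixed term $\{a_\mu\gamma^\mu,\,c_{\rho\sigma}\Sigma^{\rho\sigma}\}$ is an axial vector $\propto \epsilon^{\mu\rho\sigma\tau}a_\mu c_{\rho\sigma}\,\gamma_\tau\gamma^5$ and vanishes because every surviving $\epsilon$-contraction pairs two copies of $\xi$ (or of $\complexconj{\xi}$), while $(c_{\rho\sigma}\Sigma^{\rho\sigma})^2$ carries no pseudoscalar part since a simple bivector obeys $c_{\mu\nu}\widetilde c^{\,\mu\nu}=0$. Hence $V^2=\delta^2\,\mathbb{1}$ with $\delta^2=a_\mu a^\mu+2c_{\mu\nu}c^{\mu\nu}$, so that $A_{xy}$ has the two doubly degenerate eigenvalues $\lambda_\pm=b\pm\delta$ and $4\,\Var_\Omega[\tilde A_{xy}]=(\lambda_+-\lambda_-)^2=4\delta^2$.

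The main obstacle is the reality and positivity needed to drop the absolute values. For spacelike separated $x,y$ both sides vanish and there is nothing to prove. For timelike separation — in particular for the (regularized) Minkowski vacuum, which is the case of interest — one verifies $\delta^2\ge 0$ and $b\ge|\delta|\ge 0$ (for the unregularized vacuum this is the bound $b=|\alpha|^2(y-x)^2+|\beta|^2\ge 2|\alpha||\beta|\sqrt{(y-x)^2}\ge |a|\sqrt{(y-x)^2}=|\delta|$), so both eigenvalues are real and non-negative and the two displays coincide, giving $\L(x,y)=(|\lambda_+|-|\lambda_-|)^2=4\delta^2=4\,\Var_\Omega[\tilde A_{xy}]$. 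The only genuinely delicate configurations are the lightlike ones, where $\delta^2<0$ turns $\lambda_\pm$ into a complex-conjugate pair; since these form a $\rho\times\rho$-null set they are irrelevant for the causal action, and I would either restrict the statement to the non-spacelike, real-spectrum regime or phrase it up to this null set.
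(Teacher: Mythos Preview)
Your argument is correct and follows essentially the paper's route: decompose $A_{xy}=b\,\mathbb{1}+V$, show $V^2=\delta^2\,\mathbb{1}$ with $\delta^2=a_\mu a^\mu+2c_{\mu\nu}c^{\mu\nu}$, read off the doubly degenerate eigenvalues $b\pm\delta$, identify $b$ and $\delta^2$ with $\expval{\tilde A_{xy}}_\Omega$ and $\Var_\Omega[\tilde A_{xy}]$, and use $\lambda_+\lambda_-\geq 0$ to replace $(|\lambda_+|-|\lambda_-|)^2$ by $(\lambda_+-\lambda_-)^2$; the paper simply asserts $\lambda_+\lambda_-\geq 0$ while you actually verify $b\geq|\delta|$ in the Minkowski case. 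Your closing worry about lightlike configurations is unnecessary: for this doubly degenerate spectrum, $\delta^2<0$ makes $\lambda_\pm$ a complex-conjugate pair with $|\lambda_+|=|\lambda_-|$, which by Definition~\ref{def:causalstructure} is \emph{spacelike}, not lightlike, and is already handled by $\tilde A_{xy}=0$.
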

\begin{proof}
    As stated above, if $P(x, y)$ has the form~\eqref{eq:general-vector-scalar-structure}, then the closed chain is given by
    equation~\eqref{eq:closed-chain-vector-scalar-decomposition}.
    Thus, we have
    \begin{align*}
        \left( A_{xy} - b \right)^2 = a_\mu a^\mu + 2 c_{\mu\nu} c^{\mu\nu}
    \end{align*}
    and hence the eigenvalues of $A_{xy}$ are given by
    \begin{align*}
        \lambda^{xy}_{\pm} = b \pm \sqrt{a_\mu a^\mu + 2 c_{\mu\nu} c^{\mu\nu}}
    \end{align*}
    each with two-fold degeneracy.
    Since $\gamma^\mu$ and $\Sigma^{\mu\nu}$ are trace free, we have
    \begin{align*}
        b &= \frac{1}{4} \tr_{S_x} \left[ A_{xy} \right] = \expval{\tilde{A}_{xy}}_{\Omega}, \\
        a_\mu a^\mu + 2 c_{\mu\nu} c^{\mu\nu} &= \frac{1}{4} \tr_{S_x} \left[ \left( A_{xy} - b \right)^2 \right] = \Var_\Omega \left[ \tilde{A}_{xy} \right].
    \end{align*}
    For non-spacelike separated points, we have
    \begin{align*}
        \L(x, y) &= \frac{1}{8} \sum_{i, j = 1}^{4} \left( \abs{\lambda^{xy}_i} - \abs{\lambda^{xy}_j} \right)^2 \\
        &= \frac{1}{2} \sum_{i, j = \pm} \left( \abs{\lambda^{xy}_i} - \abs{\lambda^{xy}_j} \right)^2 \\
        &= \left( \abs{\lambda^{xy}_+} - \abs{\lambda^{xy}_-} \right)^2 = a_\mu a^\mu + 2 c_{\mu\nu} c^{\mu\nu} = 4 \Var_\Omega \left[ \tilde{A}_{xy} \right].
    \end{align*}
    where we used that $\lambda^{xy}_+ \lambda^{xy}_- \geq 0$.
    For spacelike separated points, the Lagrangian vanishes.
\end{proof}

This proposition gives a nice interpretation of the Lagrangian in the continuum limit as a variance of the causal correlation operator.
Treating the expectation value of the causal correlation operator as a two-point correlation strength, the causal action minimizes
the fluctuations of the correlation strength of all causally separated points.

In addition, the interpretation of the Lagrangian as a variance of the causal correlation operator gives a good understanding
of the `emergence' of the classical spacetime.
In the next section, we present some numerical analysis of these structures for the regularized Minkowski vacuum.
The results indicate that in the limit of vanishing regularization the total variance tends to zero.
Thus, the classical spacetime is the limit of equipartition of the correlation strength.

\subsection{Numerical Computations}\label{subsec:numerical-computations}

In the continuum limit, the kernel of the fermionic projector~\eqref{eq:p-sea}
is divergent on the lightcone.
One way of regularizing is to exponentially suppress the high energy modes by introducing a UV-cutoff scale $\varepsilon$.
Then,
\begin{align}
    P(x, y) = \int \frac{\dd^4 k}{(2 \pi)^4} (\slashed{k} + m) \delta(k^2 + m^2) \Theta(-k_0) e^{k_0 \epsilon} e^{- i k (x - y)}.
\end{align}
We refer to this as the $i\varepsilon$-regularized Minkowski spacetime.
Performing the integration gives
\begin{align}
    P(x, y) = i \alpha \slashed{\xi} + \beta
\end{align}
with $\xi^\mu = (y^0 - x^0 - i\varepsilon, \vec{y} - \vec{x})$ and $\alpha, \beta$ given by equation~\eqref{eq:alpha-beta},
but the argument of the Bessel function changed to $z \defeq m \sqrt {- \xi_\mu \xi^\mu}$.
Note that for $\varepsilon > 0$ the functions $\alpha, \beta$ are well-defined for all $x$ and $y$.
The closed chain is then given by equation~\eqref{eq:closed-chain-vector-scalar-decomposition} with non-vanishing coefficients
\begin{align}
    b &= \abs{\alpha}^2 (t^2 - r^2 + \varepsilon^2 + \abs{\frac{\beta}{\alpha}}^2), \\
    a_{\mu} &= 2 \abs{\alpha}^2 \Im \left[ \frac{\beta}{\alpha} \complexconj{\xi_{\mu}} \right] \eqdef 2 \abs{\alpha}^2 X_\mu, \\
    c_{0i} &= - 2 \abs{\alpha}^2 \varepsilon (y_i - x_i).
\end{align}
Thus, we find that two points $x$ and $y$ are timelike separated if $X_\mu X^\mu > \varepsilon^2 r^2$.
In this case, we have
\begin{align}
    \L(x, y) = 4 \Var \left[ \tilde{A}_{xy} \right] = 16 \abs{\alpha}^2 \left( X_\mu X^\mu - \epsilon^2 r^2 \right).
\end{align}

As stated in Section~\ref{sec:preliminaries} the causal action is minimal if the $\ell(x)$ (equation~\eqref{eq:el-equation-l})
is minimal and constant for all $x \in M$.
In particular, we choose the Lagrange multiplier $\s$ of the volume constraint such that $\ell(x) = 0$ for all $x \in M$.

We now want to analyze the contribution of the integral over $\L$ to $\ell(x)$.
This part corresponds to the total variance of the correlations from $x$ to all other spacetime points.
\begin{align}
    \int_\mathcal{F} \L(x, y) \dd \rho(y) &= 4 \int_{\mathcal{F}} \Var \left[ \tilde{A}_{xy} \right] \dd \rho(y) \nonumber \\
    &= 16 \int_{X_\mu X^\mu > \varepsilon^2 r^2} \abs{\alpha}^2 \left( X_\mu X^\mu - \epsilon^2 r^2 \right) \dd^4 y.
\end{align}

At this point, we want to take the limit $\varepsilon \to 0$ and analyze the behavior of the integral.
However, since $\varepsilon$ is a dimensionful parameter, we have to compare it with the other terms of $\ell(x)$.
In the following analysis, we neglect the $\kappa$ term.
In addition, we have to fix the two scaling degrees of freedom~\cite{Curiel2020}, which come from the invariance under $\rho(\Omega) \to \sigma \rho \left( \frac{\Omega}{\lambda} \right)$.
First, we restrict the spacetime to a box with edge length $L$ and choose $\sigma$ such that the total volume is $1$.
The second scaling freedom is fixed by choosing the local trace as $1$, i.e.
\begin{align}
    1 \stackrel{!}{=} \tr P(x, x) \approx \frac{\lambda}{(2 \pi)^3} \frac{m}{\varepsilon^2} \Rightarrow \lambda = (2 \pi)^3 \frac{\varepsilon^2}{m}.
\end{align}
Having fixed the scaling parameters, we have
\begin{align}\label{eq:total-variance}
    l_\epsilon(x) \defeq \int \L(x, y) \dd \rho(y) = 16 \lambda^4 \int_{\substack{X_\mu X^\mu > \varepsilon^2 r^2 \\ \abs{t}, r < L}} \abs{\alpha}^2 \left( X_\mu X^\mu - \epsilon^2 r^2 \right) \dd^4 y.
\end{align}
Note that the integrand only depends on the difference $\xi$.
Thus, $l_\epsilon(x)$ is independent of $x$ and due to the fixing of the scaling parameters, we now can take the limit
$\varepsilon \to 0$.
We compute $l_\epsilon(x)$ for different values of $m \varepsilon$ numerically.
The results are shown in Figure~\ref{fig:total-variance}.
Fitting the numerical values gives rise to a power law of $\varepsilon^8$ which in limit $\varepsilon \to 0$ goes to zero.

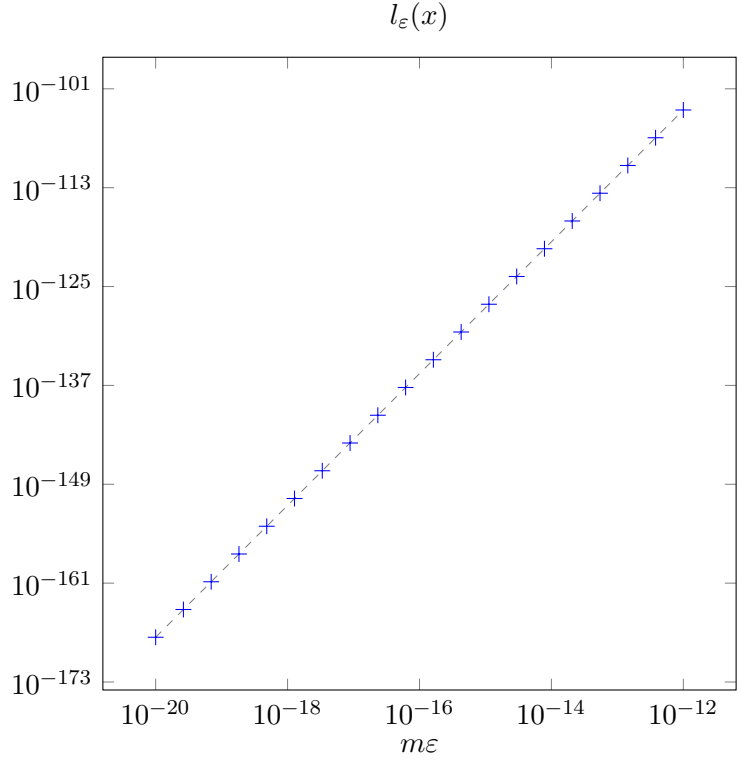
\begin{figure}
    \label{fig:total-variance}
 \begin{tikzpicture}
    \begin{axis}
        [
        xlabel={$m\varepsilon$},
        xtick={1e-20, 1e-18, 1e-16, 1e-14, 1e-12},
        xmode=log,
        ymode=log,
        title={$l_\varepsilon(x)$},
        width=10cm,
        height=10cm
        ]
        \addplot[
            blue,
            only marks,
            mark=+,
            mark size=3pt
        ] table {numerical-integration.dat};

        \addplot[domain=1e-20:1e-12, samples=100, dashed, gray] {2.70e-08 * x^8};
    \end{axis}
\end{tikzpicture}
    
    \caption{
        The plot shows the numerical integration of expression of equation~\eqref{eq:total-variance} for different values
        of $m \varepsilon$.
        The numerical values are fitted using the function $\varepsilon \mapsto a \varepsilon^b$ with $a = 2.7 \cdot 10^{-8}$ and $b = 8$.
    }
\end{figure}

\section{Description of Experiments}\label{sec:experiment}

With the pieces in place, it is now time to explain how to interpret experiments within the setting of \ac{cfs}.
First, recall that fundamentally \ac{cfs} is a completely relational theory, where the causal action principle determines
which correlations between the fundamental fermionic degrees of freedom are realized in a physical system.
This web of correlations is the only thing that makes up spacetime in perfect alignment with Mach's principle in the sense
that if you take away all matter, i.e.~all the fermionic states\footnote{
    The fermionic states need not manifest in spacetime as matter the way we conventionally define it.
    This will be elaborated later on.
}, there is no more space as there is nothing to relate.
In fact, for a \ac{cfs} with spin dimension $n$ for there to exist any regular spacetime points, the dimension of the
Hilbert space must be at least $2n$.

Now, there are two crucial questions that we need to answer before we can combine them to form a full description of our everyday
physical experience:
\begin{enumerate}
    \item How do you test a system that comprises the entire universe from within itself?
    \item How does a purely relational theory give rise to our conventional experience of time and space?
\end{enumerate}

To answer the first question, we take inspiration from \ac{gr} which is faced with a similar challenge.
If everything curves spacetime, how do you test the curvature of spacetime?
The answer here is simple: Take a system that is small enough such that its effects on spacetime are negligible
in terms of its ``mass'', i.e.~its contribution to spacetime curvature, compared to the contributions of all other systems.
Then one can treat the ``test body'' as a geodesic in spacetime and study its trajectory.
Now in the \ac{cfs} setting, we have fewer structures to work with.
To make progress at this point, we make use of the Definition~\ref{def:subsystem} of a subsystem.
We split the total physical system $(\H, \Omega)$ into a probe $(\H_P, \omega_P)$ and the background $(\H_B, \omega_B)$ such that\footnote{
    Many attempts to define concepts of locations starting with an abstract Hilbert space, rely on tensor product structures (see e.g.~\cite{niederklapfer2024fundamental,franzmann2024or,zanardi2004quantum}).
    Here the split into background and probe and the emergence of spacetime happen in separate steps.
    Accordingly a direct product is much more natural.
    The analogy with solid state theory can be helpful again. There exists a larger underlying Hilbert space given by the total band structure of the material. However, at zero temperature only a subspace thereof is occupied such that there are $N$ band electrons per lattice cell. The subspace consists of all the one particle states in tha band structure that are occupied. 
    Furthermore, when we work with band structures in a lattice, it is still completely natural to consider occupied states in the band
    as individual electrons that can travel through the lattice and who's motion we can in principle study individually.
    In that setting it is also natural to consider subspaces when describing individual electrons rather than tensorial factor.
}

\begin{align}
    \H &= \H_P \oplus \H_B &\text{ and }&& \Omega &= \omega_P +  \omega_B.
\end{align}
In a typical experiment, we then require that the number of degrees of freedom is several orders of magnitude smaller than
the number of degrees of freedom in the background, i.e.~$\dim (\H_P) \ll \dim (\H_B)$.
The equivalent of geodesics in \ac{gr} in our context would be the idealized probe consisting of a single fermionic
degree of freedom.

To answer the second question, we have to join together a few key ideas.
First, we observe that the continuum limit for \ac{cfs} as first derived in~\cite{cfs} and explained in Section~\ref{sec:continuum-limit}
corresponds to the limit of infinite states, i.e.~$N = \dim \H \rightarrow \infty$.
The local correlation map~\eqref{eq:localcorrelation} $ F[g_{\mu\nu}, A_\mu,\dots]: \mathcal{M} \rightarrow \F_n$ provides us with a macroscopic
description of the \ac{cfs} in terms of emergent variables $g_{\mu\nu}, A_\mu, \dots$ akin to temperature, pressure and
so on in thermodynamics.
Hence, the classical fields defined on a spacetime are considered as statistical quantities derived from an averaging
process over a large number of degrees of freedom.
The causal action principle takes the role of the law of equipartition in thermodynamics.
This can be seen from the fact that in certain important examples (see Section~\ref{subsec:numerical-computations})
the causal action reduces to principle of minimal fluctuations.
This suggests that in the large $N = \dim\H$ limit a large class of \ac{cfs}
admit an emergent description in terms of a local correlation map and therefore in terms of classical field $g_{\mu \nu}, A_\mu$, etc.
The large $N$ limit of the background system with its description in terms of these fields then gives rise to the
physical concepts we are familiar with\footnote{
    At this point it is worth mentioning that on a universal scale it is precisely the CMB, a phenomena we study entirely
    in terms of its correlations, that allows us to establish a notion of absolute speed in the universe.
}.

To characterize our physical system, we then study the evolution of a probe subsystem $(\H_P, \omega_P)$ with respect to the
web of correlations spanned by the background subsystem $(\H_B, \omega_B)$.
While in principle, we can treat any subspace of $\H$ as the probe, in practice, to perform an actual experiment two caveats apply:
\begin{enumerate}
    \item The web of correlation spanned by the background subsystem $(\H_B, \omega_B)$ must admit an approximate effective description in terms of a continuum limit.
    \item To be able to study the evolution of a system in a laboratory, we have to be able to meaningfully localize the probe on the scale of the experiment.
\end{enumerate}

The first condition can always be satisfied when we construct a \ac{cfs} from a continuum description with a suitable regularization. 
For the second condition, our notion of localization of states introduced in Section~\ref{subsec:spacetime-superposition} is crucial.
In particular, we need to be able to localize a state $u_i$ within spacetime regions $\mathcal{U}_{i} \subset \mathcal{U}_L \subset M$
where $\mathcal{U}_L $ is the spacetime region covered by the experiment.
The collection of spacetime regions $\mathcal{U}_{i}$ are then our data points.

Regarding the example of Minkowski space, we have the following corollary. 
\begin{corollary}
      All states in the continuum limit of Minkowski space are delocalized. 
\end{corollary}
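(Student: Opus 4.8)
The plan is to translate the geometric statement $M_u = M$ into a statement about where the one-particle density vanishes, and then to rule out such vanishing using the analyticity of negative-energy solutions. Recall from Definition~\ref{def:delocalized-states} that $\ket{u}$ is delocalized precisely when $M_u \defeq \supp \rho_u$ equals $M = \supp \rho$. By the computation carried out in the proof of Proposition~\ref{prop:positivity-of-the-spacetime-observables}, the one-particle measure is absolutely continuous with respect to $\rho$, with $\rho_u(\mathcal{U}) = \frac{1}{2n}\int_{\mathcal{U}} \norm{\psi^u(x)}^2 \dd\rho(x)$. Hence $\supp \rho_u \subseteq \supp \rho = M$ automatically, and the real content of the corollary is that the density $x \mapsto \norm{\psi^u(x)}^2 = \norm{\pi_x u}^2$ does not vanish $\rho$-almost everywhere on any open subset of $M$; equivalently, that for every nonzero $\ket{u} \in \H$ the set $\{x \in M : \psi^u(x) \neq 0\}$ is dense in $M$.

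First I would identify the zero set of the density with the zero set of the spinor field. At a regular point $x$ the local correlation operator $F(x)$ has maximal rank $2n = 4$, so the evaluation map $\H \ni v \mapsto v(x)$ is onto the spinor space; writing $\braket{v}{F(x) w} = \spinbraket{v}{w}_x = -\diracad{v(x)} w(x)$ then shows that $\ker F(x) = \{w : w(x) = 0\}$, and therefore $\pi_x u = 0$ if and only if $u(x) = 0$. Since the Minkowski vacuum is regular at $\rho$-almost every point, $\norm{\psi^u(x)}^2 = 0$ holds (almost everywhere) exactly where the solution $u$ vanishes. Thus the problem reduces to showing that a nonzero element of the Dirac sea cannot vanish on an open set.

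The key analytic input is that every $\ket{u} \in \H$ is a (limit of) negative-energy solutions, so its spacetime Fourier transform is supported on the lower mass shell $\{k^0 = -\sqrt{\abs{\vec k}^2 + m^2}\}$, in particular on the half-space $k^0 \leq -m$. Consequently $u(t, \vec x)$ is the boundary value of a function holomorphic in the time variable in a half-plane, and this Reeh--Schlieder-type analyticity forbids $u$ from vanishing on any open set unless $u \equiv 0$. Equivalently, each component solves the constant-coefficient equation $(\Box + m^2) u_\alpha = 0$, whose solutions enjoy unique continuation. Hence $\{x : u(x) \neq 0\}$ is dense in $M$, the density of $\rho_u$ has full support, and $M_u = M$ for every nonzero state, i.e.\ all states are delocalized.

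The main obstacle I expect is precisely this analytic step: making the Reeh--Schlieder / unique-continuation argument rigorous for general $\H$-vectors, which are only $L^2$ solutions rather than smooth ones, and carefully justifying the boundary-value holomorphy directly from mass-shell Fourier support. By contrast, the identification $\ker F(x) = \{w : w(x) = 0\}$ is routine once regularity of the vacuum is invoked, and the absolute continuity of $\rho_u$ is already in hand; so the whole argument hinges on the fact that negative-energy Dirac solutions cannot be locally trivial.
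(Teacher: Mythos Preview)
Your approach is the same as the paper's: the Hilbert space in the Minkowski continuum limit consists solely of negative-energy solutions, whose semi-bounded spectrum prohibits localization. The paper dispatches this in two lines by citing Hegerfeldt's theorem together with a quantitative companion result; you instead unpack the mechanism, and your reduction of $M_u = M$ to density of $\{x : u(x) \neq 0\}$ (via absolute continuity of $\rho_u$ and the identification $\pi_x u = 0 \Leftrightarrow u(x) = 0$ at regular points) is more explicit than anything the paper writes down.

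One genuine error to correct: the claim that solutions of $(\Box + m^2) u_\alpha = 0$ ``enjoy unique continuation'' is false as stated. Klein--Gordon is hyperbolic, and compactly supported Cauchy data produces a solution vanishing on an open set (outside the light cone of the support) without vanishing identically. What makes unique continuation work here is precisely the restriction to a \emph{single sheet} of the mass shell, not the equation itself; so this alternative should be dropped or reworded. Relatedly, support in the half-space $\{k^0 \leq -m\}$ gives holomorphy only in the time variable, which by itself yields vanishing merely on a time-cylinder $\R \times W$, not on all of $M$. To conclude $u \equiv 0$ you want the full forward-tube analyticity coming from Fourier support in the closed backward light \emph{cone} --- this is the actual Reeh--Schlieder input, and is also the mechanism underlying Hegerfeldt's result that the paper cites.
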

  
\begin{proof}
This follows immediately from the fact mentioned in Section~\ref{sec:continuum-limit} that the Hilbert space in the continuum limit only contains negative energy solutions and the result of Hegerfeldt~\cite{hegerfeldt1974remark} and the quantitative analysis in~\cite{finster2023incompatibility} that solutions with support in a semi-bounded spectrum cannot have spatially compact support.    
\end{proof}

As a result, Minkowski space, as an effective description, appears as a vacuum spacetime where no experiments can take place. 
This confirms a remark in~\cite{isidro2024gravitation}, that the delocalized states play a key role in the description of experiments.
Namely, it is those states that make up the background, and only once we perturb Minkowski space by adding states that are localized in a meaningful way, we arrive at
a continuum limit description containing matter.
In an actual experiment, we track the localized states to probe the web of correlations
spanned by the background.

\subsection{(Quantum) Reference Frames in CFS}\label{subsec:reference frames}

At the fundamental level \ac{cfs} is a Hilbert space-based relational theory.
Yet, the probe-background split, together with the local correlation map,
provides us with a reference frame, which we can use to describe experiments.
An immediate question that arises is how this notion of reference frame compares to ideas discussed in the \ac{qrf} community,
see, e.g.~\cite{lake2023quantum,goeller2022frames, quantumFrameRelativity, bartlett2007reference, fewster2025quantum,Apadula2024quantumreference, ali2024quantum, Krumm2021quantumreference,Yang2020switchingquantum,carette2025operational,giacomini2019quantum}.
We will now comment on some aspects.

\subsubsection{Preferred Reference Frame}\label{subsubsec:preferedRF}
First we note that there is, at least heuristically, a preferred probe-background split.
Given that the split is given by a direct sum,
we can consider the set of all bases' $\mathcal{B} \defeq \{(u_i) | (u_i) \text{ basis of } \H \}$.
The number of delocalized states depends on the choice of a basis $(u_i)$.
A preferred basis for $\H$ is one that maximizes the number of delocalized states.
For a given basis $(u_i)$, the background $(\H_B,\omega_B)$ is chosen as the subspace of all delocalized states,
i.e.~$\H_B = \Span\{u_i | u_i \text{ delocalized}\}$.
The remaining states can be split into those
making up the instruments $(\H_I,\omega_I)$ and the actual probe $(\H_P,\omega_P)$
studied in the experiment\footnote{If we consider the universe as a whole then for most experiments there will be another sector of the Hilbert space $(\H_R,\omega_R)$ comprised of the states localized in spacetime regions disjoint from the laboratory $\mathcal{U}_L$. In practice, these states can be simply dropped from the \ac{cfs} to describe the specific experiment at hand.}.
In practice, the condition for the background states to be delocalized can actually be weakened.\\

\textit{
We say that a state $u_i$ is quasi-delocalized with respect to a spacetime region $\mathcal{U_L}$ if $\mathcal{U_L}\subset M_i$ and if for all spacetime regions $\mathcal{U}$ containing $\mathcal{U_L}\subset \mathcal{U}$ with $u_i$ localized in $\mathcal{U}$ we find that $\mathcal{U}$ is substantially larger than $\mathcal{U}_L$.
} \\

We can weaken the condition
under which we consider a state
to be part of the background by $\H_B = \Span\{u_i|u_i \text{ quasi-delocalized with respect to }\mathcal{U}_L\} $ accordingly.
Once we have established a preferred split, the choice of a basis in $\H_B$ remains.
Such a change of basis corresponds to a change in the reference frame of the background system.
It is important to remember that in the approximation of the effective continuum description,
every basis vector corresponds to a physical wave function
and that the local correlation map $F[g_{\mu\nu}, A_\mu,\dots]: \mathcal{M} \rightarrow \F_n$ is built from the correlations of these wave functions.
As a result, there can in principle be multiple continuum approximations for the same \ac{cfs},
as long as they give rise to the same web of correlations of the physical wave functions.
Ockham's razor then compels us to choose the simplest description of the continuum approximation that fits the bill.

\subsubsection{Superposition of Macroscopic Objects}
To sharpen our understanding of the nature of background reference frames in \ac{cfs}, we want to focus on a set of
ideas that have been studied in the context of spacetime superposition such as the superposition of
causal structures induced by, e.g.~earth in spatial superposition (see for example~\cite{zych2019bell}).
We have already seen in Corollary~\ref{cor:fixcausalstructure} that the nature of the causal relation between two points
in a \ac{cfs} is independent of the subsystem spacetime within which they are studied\footnote{
    The distance between two points is generally different for different subsystem spacetimes.
}.
We now explain how on the level of the effective description in terms of classical spacetime a superposition of macroscopic
objects and thereby of spacetime causal structure cannot arise in the context of \ac{cfs}.
In particular, we demonstrate that there is no meaningful notion of a ``reference frame'' of the probe $(\H_P, \omega_P)$.

First of all, we note that in \ac{cfs} we cannot have a physical system
that consists only of the earth $(\H_E, \omega_E)$ and a quantum system $(\H_P, \omega_P)$.
Instead, for there to be a meaningful notion of spatial relation between the systems $E$ and $P$ in \ac{cfs} we require
a background system $(\H_B, \omega_B)$ with $\dim \H_B \gg \dim \H_E + \dim \H_P$ and
\begin{align}
    \H = \H_P \oplus \H_E \oplus \H_b.
\end{align}
Recall that the variables of the continuum limit are statistical averages over the background states.
Ignoring for a moment the question of suitable (de)-localization a comparison of dimensions is sufficient to convince oneself
that there does not exist such a thing as the ``reference frame'' of the system $P$.

The reference frame in which we perceive the probe to be in spatial superposition is given by an effective description of $\H_E \oplus \H_B$.
Because the effective description of the background is given by a statistical average over all background states,
changing the reference system to $\H_P \oplus \H_B$ barely affects the effective description.
As a result, the Earth would not appear to be in superposition in that new reference frame,
unless it was already in the reference frame obtained from $\H_B$ alone.
Due to the fact that $\dim \H_B \gg \dim \H_P$ the contribution to the effective description from the probe is simply negligible.
Here we recall again Corollary~\ref{cor:fixcausalstructure} which shows
that the causal relation between any two points $x, y$ in spacetime $M$ is independent of the choice of the probe-background split,
and hence no admissible continuum description of a \ac{cfs} can contain a superposition of macroscopic causal structures.

If instead we  try to designate the space $\H_P$ as our background, then we run into
two problems: First, the system $\H_P$ was implicitly assumed to be localized (otherwise it would make no sense to be
speaking about it being in spatial superposition\footnote{
    Here spatial superposition refers to the idea that we can localize the state $u_i$ in a spacetime region
    $\mathcal{U} = \mathcal{U}_1 \cup \mathcal{U}_2$ with all spacetime points in $\mathcal{U}_1$ being spatially separated
    from all spacetime points in $\mathcal{U}_2$.
}) and therefore holds no information about the distant correlation structure of the \ac{cfs}.
Second, we assumed that the probe was small, in the most extreme case even consisting of just a single degree of freedom,
and hence it simply does not hold enough information to admit a viable continuum approximation\footnote{
    This is consistent with the conclusion in~\cite{reitz2023model} that not all webs of correlations admit a viable continuum approximation.
}.
This conceptual disagreement between the scenario considered for example in~\cite{zych2019bell} and \ac{cfs} might allow
for an experimental distinction between these two sets of ideas.

\section{Outlook and Conclusion}\label{sec:outlook}
In the present paper we put forward a first proposal for a full CFS ontology that works within the structures of the theory alone, relying on the continuum limit only as an efficient tool to encode the information contained in the web of correlations that makes up the \ac{cfs}. This is a significant step forward in establishing \ac{cfs} theory as a leading contender to describe our physical world at all scales.

However, many of the ideas we introduced are still rough on the edges. In the following, we present a necessarily incomplete list of additional questions that arise from the present work.
\begin{enumerate}
      \item In the course of our paper we have drawn inspiration from a wide range of different schools of thought in fundamental theoretical physics. This led us to introduce a variety of new concepts to the study of \ac{cfs}. Foremost, we introduced the idea of conceiving a \ac{cfs} as a many-body quantum system in which concepts can be defined in such a way that they make sense for a subsystem and can be added up to obtain the respective properties for the total system. In that spirit, we introduced the position observables for \ac{cfs} and the associated notion of one-particle spacetimes that can be superposed to obtain the total spacetime. The one exception to this linear superposition is the Lagrangian, which we have shown in an important subclass of \ac{cfs} to be the variance of the total correlation strength and thereby manifestly nonlinear in the one-particle states. 

    As a result, when studying linear perturbations of a \ac{cfs} we can treat the full nonlinear minimizer as a background on which the perturbations propagate\footnote{This is conceptually similar in the context of linearized gravity.}. All notions of observables and localization keep making sense for these linearized perturbations. In many situations where the number of degrees of freedom of the probe $n=\dim \H_p$ is much smaller than the dimension of the total Hilbert space $n \ll N= \dim \H$ one treats the subsystem as a linear perturbation. Currently, it needs to be shown on a case-by-case basis that this approximation is valid. It would be worthwhile to study whether the validity of the linear approximation can be established more broadly under the assumption that $\lim \frac{n}{N}\rightarrow 0$. 
    
    \item In Definition~\ref{def:1Pcorrelation} we introduced the one-particle two-point correlation strength $b_{v_i}(x,y)$. In the spirit of the ideas in~\cite{kempf2021correlational, reitz2023model} it is tempting to use this notion of correlation strength to try to define a notion of distance in the spacetimes $M_i$ and the total spacetime $M$. This would give rise to a well-defined correlation geometry on $\F$ and would allow us to interpret the action principle as minimizing the fluctuations in the causal distance across one-particle spacetimes.   
    
    The discussion in Section \ref{subsec:spacetime-superposition} then provides us with a clear framework for the superposition of such correlation geometries, and in the continuum limit the fluctuations go to zero. This would provide a physical explanation for the well-known result~\cite[Section 1.2.5]{cfs} that the causal structure becomes transitive, i.e.~completely well behaved/classical in the continuum limit.
    At present, however, it is unclear to the authors how to define such a notion of distance in a way that it has all the desired properties.

    \item In~\cite{dappiaggi2018linearized,finster2022construction} it was shown that a transitive causal structure can be obtained at mesoscopic scales by removing the contributions from the high-frequency states. Given that the notion of localization we introduced in Section~\ref{subsec:spacetime-superposition} deals with spacetime regions $\mathcal{U}\subset M\subset \F$, it would be interesting whether it might be possible to establish similar transitive, mesoscopic causality relations working with the intrinsic structures of \ac{cfs} alone. Conceptually what happens when we remove the high frequency states is that we only consider correlation information arising from states that vary slowly relative to the size of the spacetime regions who's causal relation we want to determine. Now on the fundamental level it is not a priori clear how to define relative scales to determine which states should be omitted. Here, the above-mentioned notion of distance might play a crucial role. The hope is that for any number of spacetime regions large enough, one might be able to derive a notion of causal order that is transitive.

    \item In Section \ref{subsec:reference frames}, we introduced the idea of a background and an associated reference frame.  
    A key result in the present work is the fact that the \ac{cfs} framework is conceptually incompatible with some scenarios studied in the literature, e.g.,~\cite{zych2019bell}, because in \ac{cfs} the vast majority of states are assigned to the background -- i.e., reference -- system, and Corollary~\ref{cor:fixcausalstructure} guarantees that the nature of the causal relation between two points is independent of the subsystem considered. This might open a pathway for an experimental distinction between the two frameworks.
    
    However, a correspondence with the more conservative aspects of \ac{qrf} seems plausible, since both study the evolution of a quantum system using other quantum states as a reference. For example, it would be interesting to see whether our background states can be understood as the ``frame states'' that ``dress'' the observable in the sense of~\cite{goeller2022frames}. It would also be good to understand whether the idea of quantum frame relativity~\cite{quantumFrameRelativity} can be adapted to our context.
    
    \item In Section \ref{subsubsec:preferedRF} we introduced the concept of quasi-delocalized states. This concept might be crucial to solving one of the longest-standing challenges in physics: determining the nature of dark matter. Based on the novel mechanism of baryogenesis introduced in~\cite{baryogenesis} and refined in~\cite{finster2025baryogenesis}, in~\cite{paganini2020proposal} a new storyline for the universe was proposed, which naturally leads to the idea of \ac{fcdm}. In this model the fermionic dark matter particles are in the lowest kinetic energy state available. In a galaxy that implies that the potential well is filled with stated from the bottom up. A key point about this kinetic configuration is that the states of the individual particles are only localized on scales comparable to the galaxy itself. This opens up an intriguing possibility. For physical processes that happen at sub-galactic scales, e.g. the orbits of stars and planets, the \ac{fcdm} states are quasi-delocalized and can therefore equally well be considered as part of the background or as part of the matter sector. Given the discussion at the end of Section~\ref{subsec:observables} we can think of these states, when assigned to the background, as providing an extra contribution to the volume form. Thinking of the volume form as a density of states along the lines of~\cite{isidro2024gravitation} this would lead to a background on galactic scales best approximated by a modified measure theory~\cite{mmt-cfs}. This could open the door for a dark-matter-MOND duality. Modified Netwonian dynamics~\cite{milgrom1983modification} is a competing hypothesis for the explanation of the rotation curves of galaxies, where it has been very successful~\cite{gentile2011things,mcgaugh2012baryonic}. However, MOND has been struggling to explain the dynamics on inter-galactic and universal scales~\cite{conundrum,clowe2006direct,skordis2006large}. The \ac{fcdm} model would provide a natural explanation for this disparity. While the dark matter states can be assigned to the background on sub-galactic scales, they are well localized when we consider inter-galactic dynamics, and hence in this setting they are firmly within the matter sector of the effective description. This has the potential to unify the two hypotheses explaining the rotation curves of galaxies. It would be substantial progress if these ideas could be made rigorous. 
    
    \item The present paper was motivated largely by the work in~\cite{finster2024causal} where a collapse model was derived from \ac{cfs} in the non-relativistic setting. The interesting aspect with respect to the present discussion is that the stochastic fields are obtained as linearized perturbations of subsystems $(\H_i,\omega_i)$ of the Minkowski vacuum. This gives rise to a plethora of effective fields in the continuum description. To calculate the dynamics of a probe $\psi (x)\in \H$ we have to take all those fields into account for which $\psi(x) \in \H_i$, and we have to keep in mind that by virtue of being contained in $\H_i$, $\psi(x)$ itself contributes to the collective perturbation described by the field. This is a manifestation of the fundamental nonlinearity of the theory at the emergent level. However, one key question arising from the current work is the following. The derivation of the collapse model assumes an abstract observable $\mathcal{O}$ that commutes both with the non-relativistic Hamiltonian and with the stochastic potentials. To develop a truly comprehensive physical interpretation of the theory, it will be important to understand whether (some of) the observables formulated in the present paper satisfy these criteria. 

\item The above considerations are particularly interesting in the context of the connection between \ac{cfs} and Fröhlich's ETH approach to \ac{qt}~\cite{ethcfs,froehlich2019review,froehlich2019relativistic}. If we assume a \ac{cfs} given by a discrete measure $\rho$, then the observables we get are projection operators with $2n$ dimensional eigenspace. So we get a starting point compatible with Fröhlich's setup. Now, a loss of access to information in the present setting would intuitively correspond to a setting where the Hilbert space encoding the future of a point $\bigcup_{y \textit{ in future of } x} S_y$ is a strict subset of the Hilbert space $\H$. This implies that there exist states that do not contribute to the correlations in the future of $x$. This can also be understood as there existing states $v\neq u \in \H$ who's physical wave functions agree on the relevant subset of spacetime. In this case they carry the exact same information and are thus redundant signifying that the future encodes less information than the past\footnote{If the physical wave functions of two states agree in the entire spacetime $\psi_u(x)=\psi_v(x) \; \forall x\in M$ then one of them is redundant, i.e.~an element of the auxiliary Hilbert space. This is essentially Shannon's sampling theorem at work in a universe with finite information. Therefore, if we add physically relevant states to a \ac{cfs} the number of spacetime points in $M$ has to increase, giving us a finer resolution of the approximated continuum spacetime. Working with the auxiliary Hilbert space allows us to encode all this in the measure and the projection alone without having to change any other structure. }. Trying to make this connection more rigorous could help sharpen the concept of ``information'' in a \ac{cfs} which we have deployed frequently and rather loosely throughout the paper.

\item The way in which we introduced the fermionic projector in Section~\ref{sec:causalcorrelations} might allow us to establish a connection to Barandes reformulation of \ac{qt}~\cite{barandes2023stochasticquantumcorrespondence,barandes2024newprospectscausallylocal}. A priori, there is no reason why the information transfer operator should be unitary. Furthermore, it should be noted that the nonlocality in time in~\cite{finster2024causal} does not allow the evolution to be cut at any arbitrary point which is conceptually in agreement with the arguments of Barandes. Only at points where the wave function collapsed onto a particular observable can we restart the evolution. If the two sets of ideas could be aligned, this would allow us to tap into a whole other set of ideas to advance \ac{cfs} phenomenology.

\end{enumerate}

As a final remark, we comment on two insights from solid-state theory that might apply in the context of \ac{cfs} or quantum gravity more broadly. First, in solid-state theory the macroscopic properties of a material are largely dominated by the band structure around the Fermi energy. The rest of the band structure, no matter how intricate, does not really matter, except to determine where the Fermi energy comes to lie. In a similar way, in \ac{cfs} the cusp of the mass shell and the mass gap carry most of the information that characterizes the effective macroscopic description of the physical system. Second, only in extreme circumstances such as the formation
of Cooper pairs in superconductors does the electron-electron interaction change the dynamics of the system substantially. Therefore, the non-interacting picture is a good approximation for the vacuum spacetime in \ac{cfs}.

 \printbibliography    
\end{document}